\documentclass{ecai}

\usepackage{amsmath, amssymb, amsthm}
\usepackage{paralist}
\usepackage{xspace}
\usepackage{tikz}
\usetikzlibrary{calc,shapes,arrows}
\usepackage{thmtools, thm-restate}
\usepackage{stmaryrd}
\usepackage{algorithm}
\usepackage[noend]{algpseudocode}

\usepackage{caption}
\usepackage{todonotes}

\setlength{\marginparwidth}{2cm} % just for todonotes to work

\newcommand{\mc}[1]{\mathcal{#1}}

\newcommand{\bool}{\mathit{bool}}
\newcommand{\combine}[2]{[{#1}\,{\circledast}\,{#2}]} % find nicer notation?
\newcommand{\dom}{\mathit{dom}}
\newcommand{\U}{\mathrel{\mathsf{U}}} % LTL operator
\newcommand{\R}{\mathrel{\mathsf{R}}} % LTL operator
\newcommand{\G}{\mathsf{G}\xspace}
\newcommand{\F}{\mathsf{F}\xspace}
\newcommand{\X}{\mathsf{X}\xspace}
\newcommand{\Xw}{\mathsf{X}_{\mathsf w}\xspace}
\newcommand{\Xs}{\mathsf{X}\xspace}

\renewcommand{\SS}{\mathcal S}
\newcommand{\PP}{\mathcal P}
\newcommand{\FF}{\mathcal F}
\newcommand{\LL}{\mathcal L}

\newcommand{\TT}{\mathcal T}

\newcommand{\modelsT}{\models}

\newcommand{\ANDOR}[1][\psi]{\mathcal A_{#1}}

 % update operation in CG construction
\newcommand{\foa}{\mathsf{atoms}} % constraints in symbol
 % quantifier elimination
\newcommand{\goto}[1]{\mathrel{\raisebox{-2pt}{$\xrightarrow{#1}$}}}
\newcommand{\gotos}[1]{\mathrel{\raisebox{-2pt}{$\smash{\xrightarrow{#1}}^*$}}}
\newcommand{\assign}[1]{{\mathit{Assign}_{#1}}}
\newcommand{\lab}{\mathit{lab}}
\newcommand{\Win}{\mathit{Win}}
\newcommand{\PreC}{\mathit{Pre}}
\newcommand{\Conf}{\mathcal C}

\newcommand{\remX}{\mathsf{rmX}}
\newcommand{\alphas}{\boldsymbol{\alpha}}

 % less spacious \in relation 
 % less spacious = relation 
 % less spacious != relation 
 % less spacious >= relation 
 % less spacious > relation 
\newcommand{\last}{\mathit{last}}

\newcommand{\pre}[1]{\overline{#1}}

\newcommand{\env}{\mathit{env}}
\newcommand{\seq}{\mathsf{seq}}

  %{#1^{(#2)}}

\newcommand{\tup}[1]{\langle #1\rangle}

\renewcommand{\phi}{\varphi}

\newcommand{\ag}{{\mathit{ag}}}

\newcommand{\prim}{\mathsf{prm}}
\newcommand{\sub}{\mathsf{sub}}
\newcommand{\decomp}{\mathsf{decomp}}
\newcommand{\xnf}{\mathsf{xnf}}
\newcommand{\ta}{\mathsf{tnps}}

\newcommand{\eval}[1]{\llbracket#1\rrbracket}
\newcommand{\mystrat}{g}
\newcommand{\len}{\mathit{len}}

\newcommand{\LTLfMT}{\ensuremath{\mathsf{LTL_f^{MT}}}\xspace}
\newcommand{\LTLf}{\ensuremath{\mathsf{LTL_f}}\xspace}
\newcommand{\LTL}{\ensuremath{\mathsf{LTL}}\xspace}
\newcommand{\LIA}{\textsf{LIA}\xspace}
\newcommand{\LRA}{\textsf{LRA}\xspace}

\newtheorem{thm}{Theorem}
\newtheorem{example}{Example}
\newtheorem{definition}{Definition}
\newtheorem{lemma}{Lemma}

\newcommand{\lemref}[1]{Lem.~\ref{lem:#1}}

\newcommand{\defref}[1]{Def.~\ref{def:#1}}

\newcommand{\secref}[1]{Sec.~\ref{sec:#1}}

\newcommand{\thmref}[1]{Thm.~\ref{thm:#1}}

\newcommand{\exaref}[1]{Ex.~\ref{exa:#1}}
\newcommand{\figref}[1]{Fig.~\ref{fig:#1}}

\tikzstyle{state}=[draw, circle, inner sep=1.5pt, line width=.7pt, scale=.6]
\tikzstyle{edge}=[draw, ->, line width=.5pt]
\tikzstyle{action}=[scale=.55]
\tikzstyle{caption}=[scale=.9]
\tikzstyle{node} = [draw,rectangle split, rectangle split parts=2,rectangle split horizontal, rectangle split draw splits=true, inner sep=3pt, scale=.65, rounded corners=2pt]
\tikzstyle{goto} = [->]
\tikzstyle{action}=[scale=.6, black]
\tikzstyle{final}=[double]
\tikzstyle{binode} = [draw,rectangle split, rectangle split parts=2,rectangle split horizontal, rectangle split draw splits=true, inner sep=3pt, scale=.65, rounded corners=2pt]

\renewcommand{\todo}[1]{}

\newboolean{isExtended}
\setboolean{isExtended}{true}
\newcommand{\ifextended}[2]{\ifthenelse{\boolean{isExtended}}{#1}{#2}}

\begin{document}
\begin{frontmatter}

\title{First-Order LTLf Synthesis with Lookback \ifextended{\\(Extended Version)}{}}
\author{Sarah Winkler}
\address{Free University of Bozen-Bolzano, Italy}

\begin{abstract}
Reactive synthesis addresses the problem of generating a controller for a temporal specification in an adversarial environment; it was typically studied for \LTL. 
% Traditionally, reactive synthesis has been studied for standard \LTL, where it can be compiled to a two-player game on a DFA encoding the target property. 
Driven by applications ranging from AI to business process management, \LTL modulo first order-theories over finite traces (\LTLfMT) has recently gained traction, where propositional variables in properties are replaced by first-order constraints.
Though reactive synthesis for \LTLf with some first-order features has been addressed, existing work in this direction strongly restricts or excludes the possibility to compare variables across instants, a limitation that severely restricts expressiveness and applicability.

In this work we present a reactive synthesis procedure for \LTLfMT, where properties support \emph{lookback} to model cross-instant comparison of variables.
Our procedure works for full \LTLfMT with lookback, subsuming the fragments of \LTLfMT for which realizability was studied earlier.
However, the setting with cross-instant comparison is inherently highly complex, as realizability is undecidable even over decidable background theories. Hence termination of our approach is in general not guaranteed. Nevertheless, we prove its soundness, and show that it is complete if a bound on the strategy length exists.
Finally, we show that our approach constitutes a decision procedure for several relevant fragments of \LTLfMT, at once re-proving known decidability results and identifying new decidable classes.
\end{abstract}

\end{frontmatter}

\section{Introduction}

Reactive synthesis~\cite{PnueliR89,PnueliR89b} addresses the problem of automatically constructing a system that satisfies a given temporal specification in an adversarial environment. Along with the respective decision problem of realizability, it is most typically considered for \LTL, where the input is an \LTL property and a splitting of the set of propositional variables into two disjoint sets that are respectively controlled by the environment and the agent. The aim is to find a strategy for the agent such that the given property is satisfied, no matter which values are assigned to the environment variables. 
While reactive synthesis was originally considered for \LTL over infinite traces, in a recent line of work, efficient algorithms and tools have been developed that target the synthesis problem for \LTLf, i.e., \LTL over finite traces~\cite{deGiacomoV15,GiacomoFLVX022,XiaoL0SPV21}.

However, propositional \LTL and \LTLf have limited expressivity that inhibits to realistically model many practical problems. This observation led to the development of synthesis techniques for the logic $\LTL_\TT$, where propositional atoms are replaced by atoms from a  first-order theory~\cite{RodriguezS23,RodriguezS24jlamp,RodriguezS24}. These techniques are based on the compilation of a Boolean abstraction from the first-order problem that can be given to a standard realizability tool after solving some first-order satisfiability problems to ensure that the Boolean abstraction is faithful.
Independently, the logic \LTLfMT (\LTLf modulo theories) has gained traction in the last years~\cite{GGG22,GMW24}, as its versatility admits applications in a variety of areas. This includes verification of infinite-state systems using the  MoXI intermediate language~\cite{moxi} whose semantics can be expressed in \LTLfMT, and analysis of business processes~\cite{GMW24}; but \LTLfMT also covers languages used for runtime verification such as LoLA~\cite{DAngeloSSRFSMM05}, and \LTLf with arithmetic constraints considered for monitoring~\cite{FMPW23}.
\LTLfMT also subsumes the logic $\LTL_\TT$, but differs from the latter in that variables can be compared across different time instants.
This feature vastly increases expressivity, is crucial to describe the advancement of a dynamic system, and is present in most languages for reasoning over temporal properties that have some first-order flavour, including MoXI, LoLA, the runtime verification language TeSSLa~\cite{tessla}, and Data Petri nets, a variant of Petri nets popular in the analysis of business processes~\cite{MannhardtLRA16}.  
In this paper we will use the term \emph{lookback} for the feature to compare variables describing the current instant with variables for the previous instant; this is  equivalent in expressivity to comparison with variables that refer to (bounded) future instants~\cite{FMPW23}.
However, cross-state comparisons come at the cost that the satisfiability and realizability problems are undecidable even over a benign theory such as linear arithmetic~\cite{GGG22,BhaskarP24}.
The following simple example illustrates the realizability problem in such a setting:

\begin{example}
\label{exa:intro}
Alice runs a resource-consuming computing task that is supposed to take multiple days in a cloud-based elastic system.
Suppose that at every instant, a rational number $x$ describes the resources allocated by the cloud system (the environment), and a rational number $y$ holds the resource bound requested by Alice (the agent). 
Alice wants to take one day off, so she wants to set a resource bound today that is guaranteed to be larger than the resources required the following day. She estimates that the resource consumption increases from one day to the next by at most 2 units. The aim is to find a strategy for Alice that ensures this property. In \LTLfMT, this can be described as $\G(x \geq 0 \wedge x {- }\pre{x} \leq 2) \to \X (\pre y > x)$, where $\pre x$ and $\pre y$ denote the lookback variables that hold the previous values of $x$ and $y$, respectively.
\end{example}

In this paper we address, to the best of our knowledge for the first time, the reactive synthesis problem for full \LTLfMT with lookback.
Our approach to this end is conceptually simple: we first represent a DFA for the given \LTLfMT property $\psi$ as an AND-OR-graph $\ANDOR$ where edge labels are suitably split into constraints controlled by the environment and constraints controlled by the agent, following~\cite{GiacomoFLVX022}. In a second step, we define for every state in $\ANDOR$ a first-order formula that constitutes a winning condition for the agent. The overall winning condition is defined as a fixpoint which need not exist in general; but if it does, it gives rise to a strategy to synthesise $\psi$. 

In summary, we provide a three-fold contribution:
\begin{inparaenum}[(1)]
\item We present a general reactive synthesis procedure for \LTLfMT properties with lookback.
\item We prove its soundness, and show that it is also complete for \emph{bounded} realizability, i.e., in the case where there is a bound on the length of the strategy.
\item
Our procedure is shown to be a decision procedure for several fragments of \LTLfMT, including the case without lookback; properties over the theory of linear arithmetic where atoms are variable-to-variable/constant comparisons; and properties where variable interaction via lookback is bounded.
\end{inparaenum}

% Notably, our completeness result covers these decidable classes. 
%
The paper is structured as follows: In the remainder of this section we further elaborate on related work. In \secref{background} we recap \LTLfMT and reactive synthesis.
Our synthesis approach is described in \secref{approach}. We outline our decidability results in \secref{decidability}, and conclude in \secref{conclusion}.
Some proofs were moved to an appendix for reasons of space.

\paragraph{Related work.}
Reactive synthesis for \LTL was pioneered in~\cite{PnueliR89,PnueliR89b}, triggering a rich body of literature. Below we focus on LTL over finite traces (\LTLf)~\cite{DeGV13}, and \LTL/\LTLf over first-order theories. Reactive synthesis approach for \LTLf was first studied in~\cite{deGiacomoV15}, constructing a doubly-exponential DFA as a game arena upfront. This led to more efficient techniques that construct the DFA on the fly~\cite{XiaoL0SPV21}, also using knowledge compilation via SDDs for compactness~\cite{GiacomoFLVX022}.
Our approach builds up on their idea to represent a DFA as an AND-OR-graph using sentential decision diagrams (SDDs).
Efficient compositional/hybrid techniques to transform \LTLf to DFAs were also discussed in~\cite{GiacomoF21,BansalLTV20}.
% The latter three approaches are implemented in the tools that compete in the Reactive Synthesis Competition (\url{https://www.syntcomp.org/}).
Reactive synthesis for LTL$_\TT$, i.e., LTL/\LTLf over a first-order theory $\TT$ but without cross-state comparison, was studied in~\cite{RodriguezS23,RodriguezS24,RodriguezS24jlamp}, and realizability proven decidable for theories with a decidable $\exists^*\forall^*$ fragment.
For \LTLf, our \thmref{no:lookback} covers their decidability result.
A few approaches also investigated reactive synthesis over restricted \LTLf modulo theories:
Strategy synthesis has been addressed in~\cite{LFM20} for data-aware dynamic systems where guards are variable-to-variable/constant comparisons over the rationals. Our \thmref{mc} essentially reproves their results.
The synthesis problem has also been considered for constraint \LTL~\cite{BhaskarP24}, which is \LTL over infinite traces where atoms are variable-to-variable/constant comparisons in the theory of linear arithmetic. The authors prove undecidability of the general case, but decidability in 2EXPTIME for theories that enjoy a so-called completion property, and for formulas where cross-state comparisons are restricted to the agent (single-sided games). However, they do not provide a procedure for the general case.

\section{Background}
\label{sec:background}

In this section we recall LTL modulo first-order theories over finite traces (\LTLfMT), reactive synthesis, and progression.

\paragraph{\LTLf modulo theories}
The following definitions are based on~\cite{GGG22}.
We consider a first-order multi-sorted \emph{signature} $\Sigma=\langle\SS,
\PP, \FF, V\rangle$, where
$\SS$ is a set of sorts including $\mathit{bool}$; 
$\PP$ and $\FF$ are sets of predicate and function symbols; and
$V$ is a finite, non-empty set of \emph{data variables}.
% and $W$ is a set of variables disjoint from $V$, which will be used for quantification.
All variables are supposed to have a sort in $\SS$, and each predicate and function
symbol has a type with input sorts from $\SS$ and an output sort in $\SS$.
% We assume that $\PP$ contains equality predicates for all sorts.
%
$\Sigma$-terms $t$ are built as according to the grammar:
\[
t := v \mid \pre v \mid f(t_1, \dots, t_k)
\]
where $v \in V$; $f\in \FF$ has arity $k$ and each $t_i$ is a term of appropriate sort; and $\pre v$ denotes the \emph{lookback} operator applied to $v$. We will use this notation to represent the value of variable $v$ in the previous state (see \defref{semantics} below). The set $\pre V = \{ \pre v \mid v\in V\}$ is called \emph{lookback variables}.
An \emph{atom} has the form $p(t_1,\dots, t_k)$, where $p \in \PP$ is a predicate symbol of arity $k$,
and $t_i$ are terms of suitable sort. 
We consider \LTLfMT properties in negation normal form (NNF):

\begin{definition}
\label{def:syntax}
First-order $\Sigma$-formulas $\phi$ and \LTLfMT properties $\psi$ are defined as follows, where $a$ is an atom:\\[1ex]
$
\begin{array}{@{}r@{\:}l@{}}
\phi &:= \top \mid \bot \mid a \mid \neg a \mid \phi_1 \land \phi_2 \mid \phi_1 \lor \phi_2 \\
% \mid \exists w.\,\phi \mid \forall w.\,\phi \\
\psi &:= \phi \mid \psi_1 \land \psi_2 \mid \psi_1 \lor \psi_2 \mid \X \psi \mid \Xw \psi \mid \psi_1 \U \psi_2 \mid \psi_1 \R \psi_2
\end{array}
$\\[1ex]
The set of all properties $\psi$ is denoted $\LL$.%
\footnote{
Throughout the paper, we will reserve the term \emph{formula} for first-order formulas, and call elements in $\LL$ \emph{properties}.}
\end{definition}

Here $\X$ is the strict next and $\Xw$ the weak next operator.
A first-order formula $\phi$ is 
a \emph{state formula} if all its free variables are in $V$, i.e.,
% all variables in $W$ are quantified and
variables in $\pre V$ do not occur.

A set of $\Sigma$-formulas without free variables is a \emph{$\Sigma$-theory} $\TT$.
In contrast to \cite{GGG22}, we omit quantifiers in $\LL$; % to simplify the presentation;
% NOTE also because splitting of quantified formulas would be difficult
however, theory axioms may have quantifiers, as common in first-order logic.
To model quantified first-order subformulas in $\LL$, one can add extra predicates that are defined by suitable additional axioms (possibly with quantifiers) in the theory.
In the paper we will sometimes refer to common SMT theories~\cite{BarrettSST21}, namely
% : the theory of equality and uninterpreted functions for a given $\Sigma$ (\EUF),
the theories of linear arithmetic over rationals (\LRA) and integers (\LIA).
\smallskip

To define the semantics of first-order formulas, we use the standard notion of a
\emph{$\Sigma$-structure} $M$, which associates each sort $s\in \SS$ with a
domain $s^M$, and each predicate $p\in \PP$ and function symbol $f\in \FF$ with suitable interpretations $p^M$ and $f^M$. The equality predicates have the
usual interpretation as the identity relation. The carrier of $M$,
i.e., the union of all domains of sorts in $\SS$, is denoted by $|M|$.
\todo{why is equality needed}
We call a function $\alpha\colon V \to |M|$ an \emph{assignment} wrt. $M$, where
 all variables are supposed to be mapped to elements of their domain.
% The environment $\eta$ extended with a binding from $u$ to $e$ is denoted $\eta[u \mapsto e]$.
%
A \emph{trace} is a pair $\tau = (M, \langle\alpha_0, \dots,
\alpha_{n-1}\rangle)$ of a $\Sigma$-structure $M$ and a sequence of assignments wrt. $M$.
The length of a trace $\tau$ as above is $|\tau|=n$.
The concatenation of traces $\tau$ and $\tau'$ is denoted $\tau\tau'$.

\begin{example}
\label{exa:trace}
% $\psi = (\G(x \geq 0) \wedge \X\G(x {- }\pre{x} \leq 2)) \to \X (\pre y > x)$
Let $V$ consist of variables $x$ and $y$ of sort \emph{int},
and $M$ be the usual model of the theory of linear arithmetic over the integers (\LIA).
Then, e.g. $(M,\alphas)$ is a trace of length 3 if
$\alphas = \langle \{x \mapsto -1, y \mapsto 0\}, \{x \mapsto 0, y \mapsto 1\},\{x \mapsto 2, y \mapsto 2\}\rangle$.
\end{example}

A term $t$ is \emph{well-defined} at instant $i$ of a trace $\tau$ if $0<i<|\tau|$, or $i=0$ and $t$ does not contain lookback variables. In this case, the evaluation
of $t$ at $i$
% with respect to an environment $\eta$
is denoted $\eval{t}_{\tau}^i$ and defined as
$\eval{v}_{\tau}^i{=}\alpha_i(v)$,
$\eval{\pre v}_{\tau}^i{=}\alpha_{i-1}(v)$, and
$\eval{f(t_1,\dots,t_k)}_{\tau}^i = f^M(\eval{t_1}_{\tau}^i, \dots ,\eval{t_k}_{\tau}^i)$ where $v\,{\in}\,V$.

\begin{definition}[\LTLfMT semantics]
\label{def:semantics}
Satisfaction of a first-order formula $\phi$ in trace $\tau$ at instant $i$, $0{\leq}i\,{<}\,|\tau|$, is denoted
$\tau,i \models \phi$ where:
\[
\begin{array}{@{}r@{\,}ll@{}}
\tau,i \models\: & \top \text{ but }\tau,i \not \models\: \bot \\
\tau,i \models\: & p(t_1, \dots, t_k) & \text{if $t_1, \dots, t_k$ are well-defined and} \\
&&(\eval{t_1}_{\tau}^i, \dots ,\eval{t_k}_{\tau}^i) \in p^M,
\text{ or}\\
&&\text{if some $t_1, \dots, t_k$ is not well-defined}\\
\tau,i \models\: & \neg p(t_1, \dots, t_k) \quad &
\text{if }\tau,i \not\models p(t_1, \dots, t_k)\\
\tau,i \models\: & \phi_1 \land \phi_2 &
\text{if }\tau,i \models \phi_1 \text{ and }\tau,i \models \phi_2\\
\tau,i \models\: & \phi_1 \lor \phi_2 &
\text{if }\tau,i \models\: \phi_1 \text{ or }\tau,i \models \phi_2.
% \\
% \tau,i \models & \exists w.\,\phi &&
% \text{if }\tau,i[w\mapsto e] \models \phi
% \text{ for some } e\in s^M\\
% \tau,i \models & \forall w.\,\phi &&
% \text{if }\tau,i[w\mapsto e] \models \phi
% \text{ for all } e\in s^M.
\end{array}
\]
The satisfaction relation is extended to properties in $\LL$ as follows:

\[
\begin{array}{@{}r@{\,}ll@{}}
% \tau,i \models  & \phi &
% \text{if } \tau,i \models \phi \text{ for some }\eta \\
\tau,i \models  & \psi_1 \land \psi_2 &
\text{if }\tau,i \models  \psi_1 \text{ and }\tau,i \models  \psi_2\\
\tau,i \models  & \psi_1 \lor \psi_2 &
\text{if }\tau,i \models  \psi_1 \text{ or }\tau,i \models  \psi_2\\
\tau,i \models  & \X \psi &
\text{if $i<|\tau|{-}1$ and }\tau,i+1 \models \psi\\
\tau,i \models  & \Xw \psi &
\text{if $i=|\tau|{-}1$ or }\tau,i+1 \models \psi
\end{array}
\]

\[
\begin{array}{@{}r@{\,}ll@{}}
\tau,i \models  & \psi_1 \U\psi_2 &
\text{if there is some $j$, $i \leq j<|\tau|$ such that }\tau,j \models \psi_2 \\
&& \text{and }\tau,k \models \psi_1 \text{ for all }i \leq k <j \\
\tau,i \models  & \psi_1 \R \psi_2 &
\text{if either }\tau,j \models \psi_2 \text{ for all }i \leq j<|\tau|\text{, or there is}\\
&&\text{some $j$, $i \leq j<|\tau|$ such that }\tau,j \models \psi_1 \\
&& \text{and }\tau,k \models \psi_2 \text{ for all }i \leq k \leq j
\end{array}
\]
Finally, $\tau$ \emph{satisfies} $\psi$, denoted by $\tau \models \psi$,  if
$\tau,0 \models \psi$ holds.
\end{definition}

We use the usual shorthands $\F \psi \equiv (\top \U \psi)$ and $\G \psi \equiv (\bot \R \psi)$.
E.g., the trace in \exaref{trace} satisfies
$(y\,{\geq}\,x) \U (x\,{=}\,y)$ and
$\G (x\,{>}\,\pre x)$; as well as $\pre x < x$ and hence $\F(\pre x  < x)$
due to the weak semantics that we assume for variables with lookback, but not $\X \F(\pre x  < x)$.

A first-order formula $\phi$ without $\pre V$ is satisfied by a
$\Sigma$-structure $M$ and assignment $\alpha\colon V \to |M|$,
denoted $M,\alpha \models \phi$, if $(M,\langle \alpha\rangle) \models \phi$,
which corresponds to the usual notion of first-order satisfaction.
If the model is clear from the context, e.g. for LIA or LRA where the model is unique up to isomorphism,
we also simply write $\alpha \models \phi$.
%; if $\phi$ is a sentence, we simply write $M \models \phi$. 
% For a $\Sigma$-structure $M$, we will write $M\in \TT$ to express that $M$ is a model of $\TT$. 
A property $\psi \in \LL$ is \emph{$\TT$-satisfiable} if it is satisfied by some $\tau=(M, \alphas)$ with $M\in\TT$. 
Moreover, two first-order formulas $\phi_1$ and $\phi_2$ are \emph{$\TT$-equivalent}, denoted
$\phi_1 \equiv_\TT \phi_2$, if $\neg(\phi_1 \leftrightarrow \phi_2)$ is not $\TT$-satisfiable.

In this work we frequently need quantification.
A $\Sigma$-theory $\TT$ has \emph{quantifier elimination} (QE) if for any
$\Sigma$-formula $\phi$ there is a quantifier-free formula $\phi'$ that is
$\TT$-equivalent to $\phi$. This holds e.g. for the theories \LIA and \LRA, for which
quantifier elimination procedures
%as well as procedures to check satisfiability of quantified formulas
are implemented in state-of-the-art SMT-solvers~\cite{Z3,cvc5}.
However, SMT-solvers also support general first-order formulas to some extent~\cite{cvc5}, and so do powerful first-order theorem provers~\cite{KovacsV13}, even though satisfiability is in general undecidable.

For a property $\psi \in \LL$, 
$\foa(\psi)$ denotes all atoms of $\psi$, i.e., all occurrences of $a$ by to the grammar of \defref{syntax}.
We also need the set of \emph{top-level non-propositional subproperties} $\ta(\psi)$, given by $\ta(\psi) = \{\psi\}$ if $\psi$ is an atom or rooted by a temporal operator, and $\ta(\psi_1 \vee \psi_2) = \ta(\psi_1 \wedge \psi_2) = \ta(\psi_1) \cup \ta(\psi_2)$ otherwise.

We use a dedicated proposition $\last$ that is supposed to hold only in the final state of a trace.
A property $\psi \in \LL$ is in \emph{next normal form} (XNF) if $\ta(\psi)$ consists only of atoms and properties rooted by $\Xw$ and $\Xs$.
Every $\psi \in \LL$ can be transformed into an equivalent property in XNF obtained by the function $\xnf$, as follows~\cite{GiacomoFLVX022}:
\begin{compactitem}
\item $\xnf(\psi) = \psi$ if $\psi$ is a literal, or rooted by $\Xw$ or $\Xs$,
\item $\xnf(\psi_1 \wedge \psi_2) = \xnf(\psi_1) \wedge \xnf(\psi_2)$,
\item $\xnf(\psi_1 \vee \psi_2) = \xnf(\psi_1) \vee \xnf(\psi_2)$,
\item $\xnf(\psi_1\U\psi_2) = \xnf(\psi_2)\vee(\xnf(\psi_1)\wedge\Xs(\psi_1 \U \psi_2))$, and
\item $\xnf(\psi_1\R\psi_2) = (\xnf(\psi_2)\vee \last)\wedge(\xnf(\psi_1)\vee\Xw(\psi_1 \R\psi_2))$.
\end{compactitem}
For instance, $\G(x\,{\geq}\, 0)$ is equivalent to the property $(x\,{\geq}\, 0) \wedge \Xw(\G(x\,{\geq}\, 0))$ in XNF.

We call a property $\psi \in \LL$ \emph{well-formed} if all atoms in  $\ta(\xnf(\psi))$  are either rooted by $\X$ or $\Xw$, or do not contain lookback. Intuitively, this means that there are no atoms evaluated at the first instant of a trace that refer to variables beyond the start of the trace. For instance,
$\G (x \geq 0 \wedge x - \pre x \leq 2)$ and $(\pre y > x)$ are not well-formed,
but 
$\G(x \geq 0) \wedge \X\G(x {- }\pre{x} \leq 2)$ and $\X (\pre y > x)$ are.

\paragraph{Reactive synthesis.}
In the sequel a signature $\Sigma=\langle\SS,\PP, \FF, V\rangle$ is assumed where
$V$ is decomposed into two fixed disjoint subsets $X$ and $Y$, i.e., $V = X \uplus Y$. Intuitively, $X$ are variables controlled by the environment, and $Y$ variables controlled by the agent.
We consider \emph{environment-first} synthesis~\cite{RodriguezS24}, where
in every instant $i$, first the
environment emits values $\beta_i(x)$ for all $x\in X$, and afterwards the
agent can choose values $\gamma_i(y)$ for all $y\in Y$.
% \footnote{We consider \emph{environment-first} synthesis~\cite{RodriguezS24}; \emph{system-first}~\cite{GiacomoFLVX022,XiaoL0SPV21} is dual.}
Here $\beta_i$ is an assignment with domain $X$, denoted by $\beta_i \in \assign{X}$, and $\gamma_i$ is an assignment with domain $Y$, written as $\gamma_i \in \assign{Y}$.
We write $\pre X= \{ \pre x \mid x\in X\}$ and $\pre Y= \{ \pre y \mid y\in Y\}$ for the respective lookback variables.

For an infinite sequence $\pi$, let $\pi|_k$ denote its prefix of length $k$.
In this paper we consider the reactive synthesis problem for \LTLfMT:

\begin{definition}[Synthesis problem]
\label{def:synthesis}
The \emph{synthesis problem} for $\psi\in \LL$ with respect to $(X,Y)$ asks to find a strategy $g\colon \assign{X}^* \to \assign{Y}$ such that for every infinite sequence of assignments $\pi =\beta_0, \beta_1, \beta_2, \dots \in \assign{X}^*$, there is some $k \geq 0$ such that the trace $\tau = \langle\beta_0 \cup g(\pi|_1), \beta_1 \cup g(\pi|_2), \dots, \beta_k \cup g(\pi|_{k+1})\rangle$ satisfies $\tau \models \psi$.

The \emph{realizability problem} is the related decision problem, asking whether $k$ and suitable assignments $g(\pi|_i)$, $0< i \leq k+1$ exist.
\end{definition}

\noindent
We illustrate the synthesis problem for some simple properties.

\begin{example}
 Suppose  $x\in X$ and $y\in Y$.
\begin{compactitem}
 \item All of $x=y$, $\G (x=y)$, and $\G (y=\pre x)$ are realizable over \LIA or \LRA because the agent chooses $y$ after the environment fixes $x$.
 Also $\pre y=x$ is realizable due to the weak semantics of lookback, but not $\X (\pre y=x)$, as the agent cannot know the next value of $x$.
 \item $\F (x=1) \to \X\F (\pre y=x)$ is realizable by always setting $y$ to 1 (note that the agent determines the trace length).
 \item The running example from~\cite{RodriguezS24}, $\G ((x\,{<}\,2 \to \X(y\,{>}\,1))\wedge (x\,{\geq}\,2 \to y\,{<}\,x))$ is not realizable over \LIA because the two conjuncts can impose contradicting requirements e.g. if $\beta_0(x)=1$ and $\beta_1(x)=2$, but it is realizable over \LRA.
\end{compactitem}
\end{example}

The synthesis problem for \LTLfMT is undecidable in general, more precisely, even if $\TT$ is the decidable theory \LIA, and atoms are restricted to variable-to-variable/constant comparisons~\cite[Thm.~1]{BhaskarP24}.

In the context of synthesis, it is often desirable to find a strategy realizing a property $\psi \in \LL$ \emph{under the assumption} that the environment behaves in a way that is expressed as a property $\chi\in \LL$ over variables $X \cup \pre X$. For safety or stability assumptions, synthesis with assumptions can be reduced to standard synthesis (\defref{synthesis}), by searching a strategy for $\chi \to \psi$ ~\cite{GiacomoSVZ20}.
We exploit this in the next example.

\begin{example}
\label{exa:assumption}
%  Let again $x\in X$ and $y\in Y$, and the theory be \LIA.
% \begin{compactitem}
%  \item Suppose the $x$-values chosen by the environment are weakly monotonically increasing, expressed by $A = \G (x \geq \pre x)$, and consider the property $\psi = (\G (y \geq x) \wedge \F (y=2))$. Then $A \to \psi$ is not realizable because the first value of $x$ can be greater than 2.
%  \item
Consider again \exaref{intro}.
As an assumption on the environment, we suppose that the values of $x$ are non-negative and increase by at most 2 in every step, as expressed by $\chi = \G(x \geq 0 \wedge x {- }\pre{x} \leq 2)$.
The desired property is given by $\psi = \X (\pre y > x)$. Then $\chi \to \psi$ is realizable, because using the knowledge about the bounded increase of $x$, $y$ can be chosen greater than $x + 2$ in the first instant.
% \end{compactitem}
\end{example}

\paragraph{\LTLf progression.}
We consider progression over finite traces~\cite{GiacomoFLVX022} to (implicitly) construct a DFA for a property $\psi\in \LL$. For now we ignore the first-order nature of atoms in \LTLfMT properties.

\begin{definition}[\LTLf Progression]
\label{def:progression}
For $\psi \in \LL$ and $A$ a set of atoms,
progression $\psi^+(A)$ of $\psi$ with respect to $A$ is defined as follows:
\begin{compactitem}
% \item 
% $\top^+(A) = \top$ and $\bot^+(A) = \bot$,
\item
if $\psi$ is an atom, $\psi^+(A){=}\top$ if $\psi \in A$ and $\psi^+(A){=}\bot$ otherwise;
\item 
if $\psi = \neg a$ for an atom $a$ then $\psi^+(A) = \bot$ if $a\in A$ and $\psi^+(A) = \top$ otherwise;
\item 
if $\psi = \psi_1 \wedge \psi_2$ then $\psi^+(A) = \psi_1^+(A) \wedge \psi_2^+(A)$;
\item 
if $\psi = \psi_1 \vee \psi_2$ then $\psi^+(A) = \psi_1^+(A) \vee \psi_2^+(A)$;
\item 
if $\psi = \X \psi_1$ then $\psi^+(A) = \psi_1 \wedge \neg\last$;
\item 
if $\psi = \Xw \psi_1$ then $\psi^+(A) = \psi_1 \vee \last$;
\item 
if $\psi = \psi_1 \U \psi_2$ then $\psi^+(A) = \psi_2^+(A) \vee (\psi_1^+(A) \wedge (\X\psi)^+(A))$;
\item 
if $\psi = \psi_1\,{\R}\, \psi_2$ then $\psi^+(A) = \psi_2^+(A) \wedge (\psi_1^+(A) \vee (\Xw\psi)^+(A))$.
\end{compactitem}
\end{definition}

Let a sequence of sets of atoms $\sigma = \tup{A_0, A_1, \dots, A_{n-1}}$  be \emph{over} a set of atoms $C$ if $A_i \subseteq C$ for all $i$, $0\leq i < n$.
Progression is as usual generalized to sequences, by setting $\psi^+(\epsilon) = \psi$ for the empty sequence and $\psi^+(A \rho) = (\psi^+(A))^+(\rho)$ otherwise.
The following are the key properties of progression~\cite[Lem. 1 and 3]{GiacomoFLVX022}.

\begin{lemma}
\label{lem:progression}
Let $\psi \in \LL$ and $\sigma = \tup{A_0,A_1, \dots, A_{n-1}}$ a sequence of sets of atoms over $\foa(\psi)$.
Then 
\begin{compactenum}[(1)]
\item $\sigma,i \models \psi$ iff $\sigma,i+1 \models \psi^+(A_i)$ for all $i$, $0 \leq i < n$, and
\item $\sigma \models \psi$ iff $\psi^+(\sigma) \equiv \top$.
\end{compactenum}
\end{lemma}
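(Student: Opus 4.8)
The plan is to prove part~(1) by structural induction on $\psi$, and then to obtain part~(2) from part~(1) by iterating it along $\sigma$. Throughout, I read $\sigma = \langle A_0, \dots, A_{n-1}\rangle$ propositionally: an atom $a$ holds at instant $i$ of $\sigma$ exactly when $a \in A_i$, and the marker $\last$ holds exactly at the terminal instant one step past the last letter (this is the instant at which a once-progressed property is evaluated when $i = n-1$). The base and Boolean cases of the induction are immediate: if $\psi$ is an atom $a$ or a negated atom $\neg a$, then by \defref{progression} the property $\psi^+(A_i)$ is the constant $\top$ or $\bot$ according to whether $a \in A_i$, i.e.\ according to whether $\sigma, i \models \psi$, so $\sigma, i+1 \models \psi^+(A_i)$ holds iff $\sigma, i \models \psi$; and if $\psi$ is $\psi_1 \land \psi_2$ or $\psi_1 \lor \psi_2$ then progression is homomorphic and the claim follows from the two induction hypotheses and the matching clause of \defref{semantics}.

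For the temporal cases, if $\psi = \X \psi_1$ then $\psi^+(A_i) = \psi_1 \land \neg\last$, and by \defref{semantics} we have $\sigma, i \models \X\psi_1$ iff $i+1$ is a proper instant of $\sigma$ --- equivalently $\sigma, i+1 \not\models \last$ --- and $\sigma, i+1 \models \psi_1$, which is exactly $\sigma, i+1 \models \psi_1 \land \neg\last$; the case $\psi = \Xw\psi_1$ is dual, with $\psi^+(A_i) = \psi_1 \lor \last$ and $\sigma, i \models \Xw\psi_1$ iff $i$ is the last instant (i.e.\ $\sigma, i+1 \models \last$) or $\sigma, i+1 \models \psi_1$. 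If $\psi = \psi_1 \U \psi_2$, unfolding \defref{progression} yields $\psi^+(A_i) \equiv \psi_2^+(A_i) \lor (\psi_1^+(A_i) \land \psi \land \neg\last)$; one splits $\sigma, i \models \psi_1 \U \psi_2$ according to whether the witnessing instant $j$ equals $i$ (giving $\sigma, i \models \psi_2$) or is strictly later (giving $\sigma, i \models \psi_1$, that $i+1$ is proper, and $\sigma, i+1 \models \psi_1 \U \psi_2$), and then uses the induction hypotheses to rewrite $\sigma, i \models \psi_k$ as $\sigma, i+1 \models \psi_k^+(A_i)$ for $k \in \{1,2\}$. Note this remains a bona fide structural induction: the unprogressed occurrence of $\psi$ inside $\psi^+(A_i)$ is handled by invoking the semantics directly for $\sigma, i+1 \models \psi$, not the induction hypothesis. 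The case $\psi = \psi_1 \R \psi_2$ is symmetric, using $\psi^+(A_i) \equiv \psi_2^+(A_i) \land (\psi_1^+(A_i) \lor \psi \lor \last)$ and the two disjuncts in the semantics of $\R$.

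Part~(2) then follows by applying part~(1) successively to $A_0, A_1, \dots, A_{n-1}$: $\sigma \models \psi$ means $\sigma, 0 \models \psi$, which is equivalent to $\sigma, 1 \models \psi^+(A_0)$, then to $\sigma, 2 \models (\psi^+(A_0))^+(A_1)$, and so on, and finally to $\sigma, n \models \psi^+(\sigma)$; and ``$\psi^+(\sigma) \equiv \top$'' is exactly satisfaction of this residual at the terminal instant $n$. A short side induction confirms that such a fully progressed residual collapses there to a Boolean constant: any residual subproperty rooted by $\X$ or $\U$ occurs conjoined with $\neg\last$ and thus vanishes at $n$, any residual subproperty rooted by $\Xw$ or $\R$ occurs disjoined with $\last$ and thus trivialises, and the naked atoms that the $\X$/$\Xw$ clauses can introduce are dominated by the same $\last$-guards.

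The one genuinely delicate point --- and the place where the argument is most easily botched --- is the boundary bookkeeping: fixing the semantics of $\last$ and of evaluation at the extra terminal instant so that the strict-next, weak-next, until, and release cases of part~(1) come out consistent \emph{simultaneously}, and so that the terminal collapse invoked for part~(2) is valid. With the right $\last$ convention every case is a mechanical unfolding; with a slightly wrong one, one of the four temporal cases ends up off by exactly one instant, so this is the step that deserves the most care.
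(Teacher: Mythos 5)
The paper does not actually prove this lemma: it is imported wholesale from \cite{GiacomoFLVX022} (``Lem.~1 and 3''), so there is no in-paper proof to compare against. Your overall shape --- structural induction on $\psi$ for part~(1), then iterating part~(1) along $\sigma$ for part~(2) --- is the standard argument and surely the one the cited source uses. Part~(1) is handled correctly, including the genuinely delicate point you flag: $\last$ must be read as holding only at the phantom instant $n$, one past the final letter. This is forced by the $\U$ case (if the witness for $\psi_1 \U \psi_2$ at instant $i$ sits at $j=i+1=n-1$, the progressed conjunct $\neg\last$ must still be \emph{true} at instant $i+1=n-1$), and with that convention the $\X$, $\Xw$ and $\R$ cases all come out right as you describe.

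The gap is in the last step of part~(2), where you assert that ``$\psi^+(\sigma)\equiv\top$ is exactly satisfaction of this residual at the terminal instant $n$.'' These two relations do not coincide. Take $\psi=\G a=\bot\R a$ and any $\sigma$ with $a\in A_i$ for all $i$; since $\sigma$ is over $\foa(\psi)$, no $A_i$ contains $\last$, and unfolding \defref{progression} gives $\psi^+(A_i)\equiv(\bot\R a)\vee\last$ at every step, hence $\psi^+(\sigma)\equiv(\bot\R a)\vee\last$. Under your convention this residual is \emph{satisfied} at the phantom instant $n$ (where $\last$ holds), and indeed $\sigma\models\G a$; but it is not propositionally \emph{equivalent} to $\top$, which is what the right-hand side of part~(2) literally asserts. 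Your side induction establishes that the residual \emph{evaluates} to a constant at $n$; what is needed is that it is \emph{equivalent} to $\top$, and the two coincide only after additionally substituting $\last\mapsto\top$ into the residual and observing that every surviving temporal subformula is guarded by a $\last$-literal and therefore vanishes under that substitution. To close the argument you should either prove part~(2) with the right-hand side stated as ``$\psi^+(\sigma)$ simplifies to $\top$ under $\last\mapsto\top$'' and record that this is the intended reading of $\equiv\top$, or build $\last$ into the final atom set $A_{n-1}$ for the last progression step --- taking care that this second fix does not leak back into part~(1), where $\neg\last$ must remain true at instant $n-1$ for the $\X$ and $\U$ cases at $i=n-2$ to survive. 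As written, the identification is asserted rather than proved, and that is precisely the off-by-one trap you yourself warn about.
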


Here we use the notion of a sequence of atom sets $\sigma = \tup{A_0, A_1, \dots, A_{n-1}}$ satisfying a propositional/standard \LTLf property. The semantics, however, coincides with that of \defref{semantics} where atoms are considered variables of sort $\bool$, and $\sigma$ is read as a trace where at instant $i$, an atom $a$ is true iff $a\in A_i$.

% Let $\sim$ denote propositional equivalence of formulas in $\LL$, i.e., the equivalence relation obtained by applying boolean equivalences.
% Formula progression of $\sim$-equivalent formulas yields $\sim$-equivalent formulas~\cite[Lem. 2]{GiacomoFLVX022}, below we will use this fact implicitly.

\begin{example}
\label{exa:progression}
Consider $\psi=y\,{\geq}\,x \U x\,{=}\,y$ and the sequence of atom sets $\sigma=\tup{\{y\,{\geq}\,x\}, \{y\,{\geq}\,x\}, \{y\,{\geq}\,x, x\,{=}\,y\}}$.
As $\psi^+(\{y\,{\geq}\,x\})\equiv \psi \wedge \neg \last$ and
$\psi^+(\{y\,{\geq}\,x, x\,{=}\,y\})\equiv \top$, we have $\psi^+(\sigma) = \top$.
\end{example}

Given a property $\psi$, one can construct a DFA for $\psi$ by taking as set of states all properties $\psi'$ such that $\psi'=\psi^+(\sigma)$ for some sequence $\sigma$ over $\foa(\psi)$; and with a transition $\psi_1 \goto{A} \psi_2$ iff $\psi_1^+(A) = \psi_2$~\cite{GiacomoFLVX022}. However, we do not reason over DFAs explicitly here.

For assignments $\alpha$, $\alpha'$ with domain $V$, let the assignment $\combine{\alpha}{\alpha'}$ with domain $\pre V{\cup}V$ be defined as 
$\combine{\alpha}{\alpha'}(\pre{v}) = \alpha(v)$ and 
$\combine{\alpha}{\alpha'}(v)\,{=}\,\alpha'(v)$, for all $v\,{\in}\, V$.
A sequence of formulas $\sigma{=}\tup{w_0, \dots, w_{n-1}}$ is \emph{well-formed} if $w_0$ does not contain $\pre V$.

\begin{definition}
A trace $\tau=(M,\tup{\alpha_0, \alpha_1, \dots, \alpha_{n-1}})$ is \emph{consistent} with
a well-formed sequence of formulas $\rho = \tup{w_0,w_1, \dots, w_{n-1}}$
if $M,\alpha_0 \models  w_0$ and
$M,\combine{\alpha_{i-1}}{\alpha_{i}} \models w_i$
for all $i$, $0 < i < n$.
% and $\tau$ is consistent with a 
% sequence of atom sets $\sigma = \tup{A_0, \dots, A_{n-1}}$ if it is consistent with $\tup{\bigwedge A_0, \dots, \bigwedge A_{n-1}}$.
\end{definition}

For a trace $\tau$ and $\psi\in \LL$, let the sequence of atom sets \emph{corresponding to} $\tau$, denoted $\seq_\psi(\tau)$, be defined as follows: for $\tau = (M,\tup{\alpha_0, \dots, \alpha_{n-1}})$ let $\seq_\psi(\tau) = \tup{A_0, \dots, A_{n-1}}$ such that $A_0$ is the set of $a\in \foa(\psi)$ such that $a$ does not contain $\pre V$ and $M,\alpha_0 \models a$; and
$A_i$ the set of $a\in \foa(\psi)$ such that $M,\combine{\alpha_{i-1}}{\alpha_i} \models a$, for $0 < i < n$.
The following result follows from \lemref{progression} and the definition of $\seq_\psi(\tau)$.
% for the proof see \ifextended{the appendix}{\cite{extended}}:
\todo{add proof?}

\begin{restatable}{thm}{thmbasic}
\label{thm:progression:satisfaction}
Let $\psi \in \LL$ be well-formed and $\sigma = \seq_\psi(\tau)$ be the sequence of atom sets for $\tau$.
Then $\psi^+(\sigma) \equiv \top$ iff $\tau \models \psi$.
\end{restatable}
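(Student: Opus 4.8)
The plan is to separate a purely propositional part, already handled by \lemref{progression}, from the genuinely first-order part, which amounts to showing that propositional satisfaction over the atom sequence $\seq_\psi(\tau)$ agrees with first-order satisfaction of $\psi$ over $\tau$. First I would note that, by construction, $\sigma=\seq_\psi(\tau)=\tup{A_0,\dots,A_{n-1}}$ is a sequence of atom sets over $\foa(\psi)$, so \lemref{progression}(2) applies verbatim and yields $\psi^+(\sigma)\equiv\top$ iff $\sigma\models\psi$, where on the right $\models$ is the propositional reading of \defref{semantics}: atoms are treated as Boolean variables and $\sigma$ is the trace in which $a$ holds at instant $i$ exactly when $a\in A_i$. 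It then remains to establish $\sigma\models\psi$ iff $\tau\models\psi$, i.e.\ that this propositional satisfaction coincides with the first-order one.

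For this I would prove, by induction on the structure of a subproperty $\psi'$ of $\psi$, the following claim: for every $i$ with $0\le i<n$ such that, whenever $i=0$, every atom of $\psi'$ reachable without crossing an $\X$ or $\Xw$ operator is lookback-free, we have $\sigma,i\models\psi'$ iff $\tau,i\models\psi'$ (with the propositional reading on the left and \defref{semantics} on the right). The Boolean connectives are immediate from the induction hypothesis, since the side condition is inherited by the conjuncts/disjuncts. For $\X$ and $\Xw$ one juxtaposes \defref{semantics} with the propositional semantics, using $|\sigma|=|\tau|=n$ and the induction hypothesis at instant $i+1>0$, where the side condition is vacuous. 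For $\U$ and $\R$ one proceeds analogously, also invoking the induction hypothesis at instants $j,k\ge i$; when $i=0$ the side condition is needed for the operands at instant $0$, and it follows from that of $\psi'$ because $\land,\lor,\U,\R$ do not block the ``reachable without crossing next'' relation. The only genuinely first-order case is an atom $a=p(t_1,\dots,t_k)$: for $i>0$ all $t_j$ are well-defined at $i$ and, unfolding the definitions of $\eval{\cdot}^i_\tau$ and $\combine{\cdot}{\cdot}$, $\tau,i\models a$ iff $M,\combine{\alpha_{i-1}}{\alpha_i}\models a$ iff $a\in A_i$; for $i=0$ the side condition makes $a$ lookback-free, so the ``some $t_j$ not well-defined'' clause of \defref{semantics} cannot fire and $\tau,0\models a$ iff $M,\alpha_0\models a$ iff $a\in A_0$. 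The negated-atom case reduces to this one.

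Instantiating the claim at $\psi'=\psi$, $i=0$ then gives $\sigma\models\psi$ iff $\tau\models\psi$, which together with the first step closes the proof. The step I expect to be the crux is discharging the side condition for $\psi$ itself at instant $0$: subproperties of a well-formed property need not be well-formed, so the induction must carry the tailored side condition rather than well-formedness as such, and one must check both that it propagates through $\land,\lor,\U,\R$ and that it holds for the top-level $\psi$. For the latter I would argue that the atoms of $\psi$ reachable without crossing next are among the atoms of $\ta(\xnf(\psi))$ — using that $\xnf$ commutes with $\land$ and $\lor$, fixes literals and next-rooted properties, and on $\U,\R$ keeps the top-level atoms of the operands at top level while guarding the recursive occurrence by $\X$ or $\Xw$ — and hence are lookback-free by the definition of well-formedness of $\psi$.
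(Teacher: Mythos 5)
Your proposal is correct and follows essentially the same route as the paper: the paper likewise splits the claim into the propositional progression lemma (\lemref{progression}) and a separate lemma proved by structural induction showing that propositional satisfaction over $\seq_\psi(\tau)$ coincides with first-order satisfaction over $\tau$, carrying the side condition ``$i>0$ or the subproperty is well-formed'', which is interchangeable with your ``no lookback atom reachable without crossing a next operator'' invariant. Your treatment is, if anything, slightly more explicit than the paper's about why the invariant propagates through $\U$ and $\R$ and why it holds for the top-level $\psi$.
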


E.g. for the trace from \exaref{trace} and $\psi$ as in Ex.\:\ref{exa:progression} the corresponding sequence of atom sets is $\sigma$ from Ex.\:\ref{exa:progression}, and we have indeed $\psi^+(\sigma) \,{\equiv}\,\top$ and $\tau\,{\models}\,\psi$.

% \todo{are DFAs needed?}
% In the sequel, for $\psi \in \LL$, let $\Sigma_\psi = 2^{\foa(\psi)^{\pm}}$.
% Let $\Reach(\psi)$ be the set of all $\psi^+(\rho)$, for $\rho$ a sequence of constraint sets over $\Sigma_\psi$.
% A DFA $\DFA$ for $\psi$ can now be defined as follows:
% $\DFA = (Q,\Sigma_\psi,\delta,q_0,Q_F)$ where $Q$ is the set of $\sim$-equivalence classes of $\Reach(\psi)$, the initial state is $[\psi]_\sim$, $Q_F= [\top]_\sim$, and $\delta$ is given by $\delta([\chi]_\sim, w) = [\chi^+(w)]_\sim$ for all $\chi \in \Reach(\psi)$.
% Then we have~\cite[Thm. 2]{GiacomoFLVX022}:
% 
% \begin{thm}
% \label{thm:DFA}
% Let $\psi \in \LL$ and $\rho$ a sequence of constraint sets over $\foa(\psi)^\pm$. Then $\rho$ is accepted by $\DFA$ iff $\hat\delta(q_0,\rho) \in Q_F$.
% \end{thm}
% 
% % A word $w = \varsigma_0, \varsigma_1, \dots, \varsigma_{n-1}\in \Sigma_\psi^*$ is \emph{well-formed} if there are no (negated) constraints in $\varsigma_0$ that mention $\pre V$.
% 
% We get the following as a corollary of Thm.~\ref{thm:DFA} similar to~\cite{FMPW23}:
% 
% \begin{corollary}
% Given $\psi\inn\LL$, a trace $\tau$ and a well-formed word $w\inn \Sigma_\psi^+$ consistent with $\tau$, $\DFA$ accepts $w$ iff $\tau \models \psi$. 
% \end{corollary}

\paragraph{Boolean function decomposition.}
Following~\cite{GiacomoFLVX022}, we will sometimes consider a property $\psi \in \LL$ in XNF as a boolean formula where all subformulas rooted by a temporal operator are considered as propositions, and apply boolean function decomposition.
For a property $\psi \in \LL$ and $C \subseteq \foa(\psi)$, we suppose that $\decomp$ is a procedure such that $\decomp(\psi, C)=\{(\prim_1,\sub_1),$ $\dots, (\prim_k,\sub_k)\}$ for some $k\geq 1$ such that $\prim_i$ and $\sub_i$ are properties in $\LL$ (or equivalently, boolean formulas in the above sense) satisfying the following conditions:
\begin{compactenum}[(1)]
\item
$\psi \equiv \bigvee_{i=1}^k (\prim_i \wedge \sub_i)$, i.e., $\decomp(\psi, C)$ represents $\psi$,
\item
$\prim_i$ has no temporal operators,
all atoms of $\prim_i$ are in $C$ but no atom of $\sub_i$ is in $C$ (i.e., $\ta(\sub_i) \cap C = \emptyset$), for all $i$,
\item 
$\prim_1, \dots, \prim_k$ are disjoint, i.e. $\prim_i \wedge \prim_j \equiv \bot$ for all $i \neq j$,
\item 
$\prim_1, \dots, \prim_k$ are covering, i.e. $\bigvee_{i=1}^k \prim_i \equiv \top$, and
% \item 
% $\prim_i \wedge \sub_i$ is satisfiable for all $i$,
% and optionally 
\item the representation is compressed, i.e., $\sub_i \neq \sub_j$ for $i \neq j$.
\end{compactenum}
% Compression as in Item (5) is in fact optional, and only desirable for efficiency.
The decomposition $\decomp(\psi, C)$ can be found using SDDs~\cite{Darwiche11}.
Note that at this point we ignore the first-order nature of atoms, so $\equiv$ is to be read as equivalence in propositional logic. Thus, some $\prim_i\wedge \sub_i$ may be satisfiable when considering $\foa(\psi)$ as opaque propositional atoms, but unsatisfiable in $\TT$.

E.g., let $\psi = (x < 0) \vee \X \F(x\,{<}\,0 \vee x {- }\pre{x}\,{>}\, 2) \vee \X (\pre y > x)$, which is in XNF. We abstract $\psi$ to the propositional formula $a \vee b \vee c$ for $a = x\,{<}\,0$, $b = \X \F(x\,{<}\,0 \vee x {- }\pre{x}\,{>}\, 2)$, and $c = \X (\pre y > x)$.
Decomposition for $C = \{a\}$ yields
$\decomp(\psi, C) = \{(a, \top), (\neg a, b \vee c)\}$.

\section{Reactive synthesis approach}
\label{sec:approach}

Throughout this section, we assume a property $\psi\in \LL$ that is well-formed.
We use the idea from~\cite{GiacomoFLVX022} to split atom sets into those atoms 
that can be controlled by the agent, and those controlled by the environment.
However, due to the presence of lookback, the separation is a bit more subtle.
By \defref{syntax}, the atoms of $\psi$ have free variables $V\cup \pre V$, i.e., $X\cup Y \cup \pre X \cup \pre Y$.
Let
$C_{\env}\subseteq \foa(\psi)$ be the set of all atoms in $\foa(\psi)$ that do not mention a variable in $Y$, but only variables in $X \cup \pre X \cup \pre Y$.
The values of these variables, and the atoms that depend on them, 
cannot be influenced by the agent at the current instant, as they were either already determined earlier, or are fixed by the environment. We set $C_\ag = \foa(\psi) \setminus C_\env$, the set of atoms that can be influenced by the agent.

Next, we define an AND-OR-graph $\ANDOR$ for $\psi$, which can be seen as a DFA for $\psi$ where edge labels are split according to atoms determined by the environment (and earlier valuations), and atoms controlled by the agent. At choice points for the former we put an AND-node, while at choice points for the agent an OR-node occurs. Each AND-node is moreover labelled by an \LTLfMT property, which intuitively expresses what remains to be satisfied to guarantee $\psi$.

\begin{definition}[AND-OR-graph]
\label{def:andor}
For $\psi \in \LL$ in XNF,
the AND-OR-graph $\ANDOR[\psi] = (S_\land, S_\lor, E, s_0,\lab)$ is given by a set of 
AND-nodes $S_\land$, a set of OR-nodes $S_\lor$, edges $E \subseteq (S_\land \times S_\lor) \cup (S_\lor \times S_\land)$, an initial node $s_0\in S_\land$, and a labelling $\lab\colon S_\land \to \LL$ mapping AND-nodes to \LTLfMT properties.
% Each AND-node $(\psi',\phi) \in S_\land$ consists of some $\psi' \in \LL$ and a state formula $\phi$. 
$\ANDOR[\psi]$ is inductively defined as follows:
\begin{compactenum}
\item[(1)]
The initial node $s_0$ has label $\lab(s_0) = \psi$.
\end{compactenum}
For each AND-node $s$ with $\lab(s)=\psi'$ such that $\psi' \neq \top$, and 
\[\decomp(\xnf(\psi'), C_\env)=\{(\prim_1,\sub_1), \dots,(\prim_k,\sub_k)\}
\]
for every $i$, $1 \leq i \leq k$,  
$\ANDOR[\psi]$ contains:
\begin{compactenum}
\item[(2)]
an OR-node $n_i$ with edge $s \goto{\prim_i} n_i$, and
\item[(3)] 
for the subsequent decomposition
$\decomp(\xnf(\sub_i), C_\ag)=$ $\{(\prim_{i,1},\sub_{i,1}),\dots,(\prim_{i,m},\sub_{i,m})\}$ 
for each $j$, $1\leq j \leq m$ 
% such that $\phi_{i,j}:=\update(\phi, \prim_i \wedge \prim_{i,j})$ 
there are an AND-node $s_{i,j}$ with $\lab(s_{i,j}) = \xnf(\remX(\sub_{i,j}))$
and an edge $\smash{n_i \goto{\sub_{i,j}} s_{i,j}}$.
\end{compactenum}
\end{definition}

Here since all atoms were already split off by decompositions, all top-level atoms in $\sub_{i,j}$ are rooted by $\X$ or $\Xw$, and
we use the function $\remX$ that strips off all $\X$ and $\Xw$ operators, defined as~\cite{GiacomoFLVX022}:\\
% $\remX(\bot)=\remX(\last)=\bot$, $\remX(\top)=\remX(\neg\last)=\top$, $\remX(\psi_1 \vee\psi_2) = \remX(\psi_1) \vee\remX(\psi_2)$, $\remX(\psi_1 \wedge\psi_2) = \remX(\psi_1) \wedge\remX(\psi_2)$,
% $\remX(\X \psi) = \psi \wedge \neg\last$, and $\remX(\Xw \psi) = \psi \vee\last$~\cite{GiacomoFLVX022}.
$\begin{array}{@{}r@{\:}lr@{\:}l@{}}
\remX(\bot)&=\bot &
\remX(\top) &=\top \\
\remX(\last)&=\bot &
\remX(\neg\last)&=\top \\
\remX(\X \psi) &= \psi \wedge \neg\last &
\remX(\psi_1 \vee\psi_2) &= \remX(\psi_1) \vee\remX(\psi_2) \\
\remX(\Xw \psi) &= \psi \vee\last &
\remX(\psi_1 \wedge\psi_2) &= \remX(\psi_1) \wedge\remX(\psi_2)
\end{array}$

We assume that node with the same property label are shared. Then $\ANDOR$ is always finite, because essentially it is a DFA where edge labels are split according to $C_\env$ and $C_\ag$, similar as in~\cite{GiacomoFLVX022}.
An AND-node labeled $\top$ is called \emph{final}, it is unique if it exists.

The following example illustrates the construction of $\ANDOR$.

\newcommand{\psione}{\psi_2}
\newcommand{\psitwo}{\psi_1}
\begin{example}
\label{exa:andor}
We consider
$\psi = (\G(x \geq 0) \wedge \Xw\G(x {- }\pre{x} \leq 2)) \to \X (\pre y > x)$ from \exaref{assumption}, which is equivalent to the property
$\F(x < 0) \vee \X\F(x {- }\pre{x} > 2) \vee \X (\pre y > x)$ in negation normal form.
This corresponds to the AND-OR-graph $\ANDOR$ in \figref{graph}, constructed as follows:
\begin{compactitem}
\item
 $\psi$ is equivalent to
$\psi'=(x < 0) \vee \X \psione \vee \X (\pre y > x)$ in XNF,
for $\psione=\F(x\,{<}\,0 \vee x {- }\pre{x}\,{>}\, 2)$.
The decomposition according to $C_\env$ yields
$\decomp(\psi', C_{\env}) = \{(x \geq 0, \X \psione \vee \X (\pre y > x)), (x\,{<}\,0 , \top)\}$, so we construct two OR-nodes with edges from $\psi$.
The decomposition of $\X \psione \vee \X (\pre y > x)$ according to $C_\ag$ is simply
$\decomp(\X \psione \vee \X (\pre y > x), C_\ag) =\{
( \top,\X \psione \vee \X (\pre y > x))\}$, and $\remX(\X \psione \vee \X (\pre y > x)) = \psione \vee (\pre y > x) =: \psitwo$.
and trivially, we have $\decomp(\top, C_\ag) =\{(\top,\top)\}$, so we obtain new AND-nodes $\top$ and $\psitwo$.
\item 
Next, $\psitwo$ is equivalent to the XNF
$\psitwo'=(x < 0) \vee (x {- }\pre{x} > 2) \vee \X \psione \vee (\pre y > x)$. we get
$\decomp(\psitwo', C_{\env}) = \{(x\,{<}\,0 \vee x{- }\pre{x}\,{>}\, 2, \top), (x\,{\geq}\,0 \wedge x {- }\pre{x}\,{\leq}\,2, \X \psione \vee (\pre y > x))\}$, giving rise to two OR-nodes.
Next, for $\psi_3 := \X \psione \vee (\pre y > x)$ we have
$\decomp(\psi_3, C_\ag ) = \{(\pre y > x, \top), (\pre y\,{\leq}\,x, \X \psione )\}$, and as $\remX(\X\psione) = \psione$, we add an AND-node labelled $\psione$.
\item 
$\psione$ is equivalent to $\psione' = (x < 0) \vee (x {- }\pre{x} > 2) \vee \X \psione $ in XNF, and
$\decomp(\xnf(\psione), C_{\env}) = \{((x < 0) \vee (x {- }\pre{x}\,{>}2), \top), (x\,{\geq}\, 0 \wedge x {- }\pre{x}\,{\leq}\, 2, \X \psione)\}$
and
$\decomp(\xnf(\Xs \psione), C_\ag) = \{(\top, \Xs \psione)\}$.
\end{compactitem}
\begin{figure}
\centering
\begin{tikzpicture}[node distance=25mm]
 \node[state] (A) {$\psi$};
 \node[state, right of =A,yshift=7mm,xshift=-2mm] (A1) {$\vee$};
 \node[state, right of =A,yshift=-7mm,xshift=-2mm] (A2) {$\vee$};
 \node[state, right of=A2,xshift=-8mm] (B) {$\psitwo$};
 \node[state, right of =B,yshift=14mm,xshift=12mm] (B1) {$\vee$};
 \node[state, right of =B,yshift=0mm,xshift=12mm] (B2) {$\vee$};
 \node[state, right of=B1, final] (T) {$\top$};
 \node[state, right of=B2] (C) {$\psione$};
 \node[state, right of=C,yshift=0mm,xshift=5mm] (C1) {$\vee$};
 \node[state, right of=C,yshift=14mm, xshift=5mm] (C2) {$\vee$};
\draw[edge] (A) -- node[action, above,sloped] {$x\,{<}\,0$} (A1);
\draw[edge] (A) -- node[action, below,xshift=3mm,sloped] {$x\,{\geq}\,0$} (A2);
\draw[edge, bend left=10] (A1) to node[action, above] {$\top$} (T);
\draw[edge] (A2) -- node[action, below] {$\top$} (B);
\draw[edge] (B) -- node[action, above,sloped] {$x\,{<}\,0 \vee x{-}\pre x\,{>}\, 2$} (B1);
\draw[edge] (B) -- node[action, below,sloped]{$x\,{\geq}\,0 \wedge x{-}\pre x\,{\leq}\, 2$}  (B2);
\draw[edge] (B1) -- node[action, below, near start] {$\top$} (T);
\draw[edge] (B2) --  node[action, above, sloped] {$ \pre y\,{>}\,x$} (T);
\draw[edge] (B2) --  node[action, below] {$ \pre y\,{\leq}\,x$} (C);
\draw[->] (C1) to[bend right=10]  node[action, above] {$\top$}  (C);
\draw[->] (C) to[bend right=10]  node[action, below, sloped] {$x\,{\geq}\,0 \wedge x{-}\pre x\,{\leq}\, 2$}  (C1);
\draw[->] (C2) to  node[action, above] {$\top$}  (T);
\draw[->] (C) to  node[action, above, sloped, xshift=-3mm] {$x{<}0 \vee x{-}\pre x{>} 2$}  (C2);
\end{tikzpicture}
 \caption{AND-OR graph for $\F(x < 0) \vee \X\F(x {- }\pre{x} > 2) \vee \X (\pre y > x)$}
 \label{fig:graph}
\end{figure}
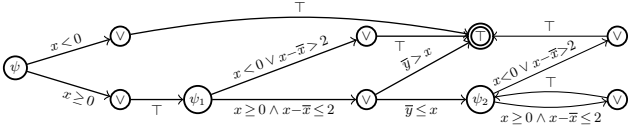
\end{example}
\medskip

Given $\psi \in \LL$ and an AND-OR-graph  $\ANDOR$ for $\psi$ with initial node $s_0$,
if there is a path of the form
\begin{equation}
\label{eq:path}
s_0 \goto{w_1} u_1 \goto{\varsigma_1} s_1 \goto{w_2} u_2 \goto{\varsigma_2}  \dots \goto{w_\ell} u_\ell \goto{\varsigma_\ell} s_\ell
\end{equation}
with AND-nodes $s_i$ and OR-nodes $u_i$,
 we write $s_0 \gotos{\rho} s_\ell$
where $\rho$ is the sequence of formulas over $\foa(\psi)$ given by 
$\rho = \langle w_1 \wedge \varsigma_1, w_2 \wedge \varsigma_2, \dots, w_\ell \wedge \varsigma_\ell\rangle$.
% have $\ta(w_k) \subseteq C_\env^\pm$ and $\ta(\varsigma_k) \subseteq C_\ag^\pm$, 
Moreover, we say that a sequence of atom sets $\sigma = \tup{A_1, \dots, A_\ell}$ such that $A_i\subseteq \foa(\psi)$ \emph{satisfies} $\rho$ if $A_i$ satisfies $w_i \wedge \varsigma_i$ in a propositional sense, for all $1 \leq i \leq \ell$.
The following lemma links the AND-OR-graph to progression:

\begin{restatable}{lemma}{lemmaandor}
\label{lem:andor}
Let $\psi$ be well-formed and $\ANDOR{=}(S_\land, S_\lor, E, s_0,\lab)$.
\begin{compactenum}[(1)]
\item
If %$\rho$ is a sequence of formulas %with atoms in $\foa(\psi)$
%such that
$s_0 \gotos{\rho} s_\ell$ and $s_\ell$ is final
then for every sequence of atom sets $\sigma$ that satisfies $\rho$, it holds that
$\psi^+(\sigma) = \top$.
\item
If $\psi^+(\sigma) = \top$ for some sequence of atom sets 
$\sigma$ over $\foa(\psi)$
then there is %a unique final node $s_\ell$ in $\ANDOR$,
a unique sequence of formulas $\rho$
such that $\sigma$ satisfies $\rho$ and $\ANDOR$ admits a path
$s_0 \gotos{\rho} s_\ell$, and $s_\ell$ is final.
\end{compactenum}
\end{restatable}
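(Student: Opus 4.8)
The plan is to reduce both parts to a \emph{one-step} correspondence between a single move through $\ANDOR$ and a single progression step, and then chain it by induction. Fix an AND-node $s$ with $\lab(s)=\psi'\neq\top$, and recall the two decompositions used in \defref{andor}: $\decomp(\xnf(\psi'),C_\env)=\{(\prim_i,\sub_i)\}_{i=1}^k$ and, for each $i$, $\decomp(\xnf(\sub_i),C_\ag)=\{(\prim_{i,j},\sub_{i,j})\}_{j=1}^{m_i}$, with the edges $s\goto{\prim_i}n_i\goto{\prim_{i,j}}s_{i,j}$ and $\lab(s_{i,j})=\xnf(\remX(\sub_{i,j}))$. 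I would first prove the following one-step lemma: for every atom set $A\subseteq\foa(\psi)$ there is a \emph{unique} pair $(i,j)$ such that $A$ satisfies $\prim_i\wedge\prim_{i,j}$ propositionally, and for that pair $(\lab(s))^+(A)\equiv\lab(s_{i,j})$. Uniqueness is immediate from conditions~(3)--(4) of $\decomp$: the $\prim_i$ are pairwise disjoint and jointly cover the truth assignments of the $C_\env$-atoms, and, for each fixed $i$, the $\prim_{i,j}$ do the same for the $C_\ag$-atoms.

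For the equivalence $(\lab(s))^+(A)\equiv\lab(s_{i,j})$, I would argue propositionally. Since $\psi'\equiv\xnf(\psi')$ and progression is invariant under equivalence (a consequence of \lemref{progression}(1), or of the canonicity of the SDD representation), it suffices to evaluate $(\xnf(\psi'))^+(A)$. Viewing $\xnf(\psi')$ as a boolean formula over $\foa(\psi')$ and the $\X$/$\Xw$-rooted subformulas treated as opaque propositions, composing the two decompositions yields $\xnf(\psi')\equiv\bigvee_{i,j}(\prim_i\wedge\prim_{i,j}\wedge\sub_{i,j})$, where every $\sub_{i,j}$ is a boolean combination of $\X$/$\Xw$-rooted formulas alone (all atoms of $\psi'$ having been split off into the $\prim$'s). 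Substituting the valuation determined by $A$ collapses this disjunction --- by disjointness and covering --- to the single disjunct $\sub_{i,j}$ for the unique matching pair; progression then rewrites each top-level $\X\chi$ to $\chi\wedge\neg\last$ and $\Xw\chi$ to $\chi\vee\last$, i.e.\ performs precisely $\remX$ on $\sub_{i,j}$. Hence $(\xnf(\psi'))^+(A)\equiv\remX(\sub_{i,j})\equiv\xnf(\remX(\sub_{i,j}))=\lab(s_{i,j})$.

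Part~(1) then follows by induction along the path: writing $\rho=\tup{w_1\wedge\varsigma_1,\dots,w_\ell\wedge\varsigma_\ell}$ and $\sigma=\tup{A_1,\dots,A_\ell}$ with each $A_i$ satisfying $w_i\wedge\varsigma_i$, the one-step lemma gives $\lab(s_0)^+(A_1)\equiv\lab(s_1)$, then $\lab(s_1)^+(A_2)\equiv\lab(s_2)$, and so on up to $\lab(s_{\ell-1})^+(A_\ell)\equiv\lab(s_\ell)=\top$; using $\psi^+(A\varrho)=(\psi^+(A))^+(\varrho)$ this chains to $\psi^+(\sigma)\equiv\top$. For part~(2) I would induct on $|\sigma|=\ell$. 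If $\ell=0$ then $\psi^+(\epsilon)=\psi\equiv\top$ and the empty path is the (unique) witness. If $\ell\geq1$, the one-step lemma applied at $s_0$ with $A_1$ produces a unique first step $s_0\goto{\prim_i}n_i\goto{\prim_{i,j}}s_1$ whose labels $A_1$ satisfies, together with $\lab(s_1)\equiv\psi^+(A_1)$; hence $\lab(s_1)^+(\tup{A_2,\dots,A_\ell})=\psi^+(\sigma)=\top$, so the induction hypothesis for $\lab(s_1)$ and $\tup{A_2,\dots,A_\ell}$ yields the unique remaining path, which necessarily ends in the final node. Prepending the first step gives the required $\rho$, and uniqueness of $\rho$ propagates from the per-step uniqueness.

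I expect the main obstacle to be exactly the bookkeeping in the one-step lemma: making fully precise that one AND--OR transition of $\ANDOR$ realizes one progression step requires tracking how the $C_\env$/$C_\ag$ partition interacts with the two rounds of $\decomp$, how substituting the valuation of $A$ selects a single disjunct, and how the residual $\X$/$\Xw$-rooted (and $\last$-containing) subformulas are turned into the next node label by $\remX$ and $\xnf$ --- which in the presence of lookback is slightly more delicate than in the lookback-free setting of~\cite{GiacomoFLVX022}, since one must control which atoms may legitimately appear under which operator; this is precisely the role of the well-formedness hypothesis on $\psi$. The two inductions above are then routine.
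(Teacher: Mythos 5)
Your proposal is correct and follows essentially the same route as the paper: your ``one-step lemma'' is exactly the paper's \lemref{xnf:decomposition} (that for the combined decomposition wrt.\ $C=\foa(\chi)$, if $A\models\prim_i\wedge\prim_{i,j}$ then $\chi^+(A)=\remX(\sub_{i,j})$, with uniqueness from disjointness and covering), and both parts are then chained by the same inductions on $\rho$ resp.\ $\sigma$. The only difference is that the paper establishes the one-step correspondence by structural induction on the XNF property using the canonical form of compressed SDD decompositions for $\wedge$ and $\vee$, whereas you argue it by direct substitution into the composed disjunction --- the same content at a slightly coarser level of detail.
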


It is proven via induction on $\rho$ resp. $\sigma$, using properties of SDD decompositions. The proof can be found in the appendix.

Below, let $\ANDOR[\psi] = (S_\land, S_\lor, E, s_0, \lab)$ be an AND-OR-graph for $\psi$.
We call a \emph{configuration} of $\ANDOR$ a pair $(s, \alpha)$ such that $s \in S_\land$ and $\alpha$ is a state variable assignment.
% The set of configurations is typically infinite.
To represent a (possibly infinite) set of configurations, let a \emph{configuration mapping} be a function $\Conf$ that assigns every node in $S_\land$ a state formula. Intuitively, $\Conf$ represents the set of configurations
$\{(s,\alpha) \mid s \in S_\land \text{ and } \alpha \models \Conf(s)\}$.

Below, the \emph{controllable preimage} of a configuration mapping $\Conf$ is defined as another configuration mapping that represents the set of all configurations in which, for all choices of values of $X$, there is a choice of values of $Y$ such that $\ANDOR$ progresses to a configuration of $\Conf$.
% This is inspired by the approach by de Giacomo and Vardi~\cite{deGiacomoV15}, but in the first-order setting, the notions are more intricate.
To present the formal definition, we need some notation to rename variables in a formula.
For a state formula $\phi$, we write $\phi[V/\pre V]$ for the formula obtained from $\phi$ by replacing $v$ by $\pre v$ for all $v\in V$.
Similarly, for a formula $\phi'$ with free variables $\pre V$, we write $\phi'[\pre V/V]$ for the formula obtained from $\phi'$ by replacing $\pre v$ by $v$ for all $v\in V$.

\begin{definition}[Controllable preimage]
Given a configuration mapping $\Conf$, the \emph{controllable preimage} $\PreC_\Conf$ of $\Conf$ is defined as follows:
for an AND-node $s \in S_\land$ of $\ANDOR$ with children $t_1, \dots, t_m$, and edges $s \goto{w_i} t_i$ for $1\leq i \leq m$, 
\[
\PreC_\Conf(s) = \bigwedge_{i=1}^m (\forall X.\: (w_i \to \PreC_\Conf'(t_i)))[\pre V/V]
\]
where for $t \in S_\vee$ whose edges to children are $t \goto{\varsigma_j} s_j$ for $1\,{\leq}\,j\,{\leq}\,n$, 
\[
\PreC_\Conf'(t) = \bigvee_{j=1}^n \exists Y.\: (\varsigma_j \wedge \Conf(s_j)).
\]
\end{definition}

Based on the controllable preimage, we can define a family of configuration mappings $\Win_i$ that intuitively capture those configurations where the agent has a winning strategy in at most $i$ steps:

\begin{definition}
Given an AND-OR-graph $\ANDOR$ with initial node $s_0$,
let the configuration mappings $\Win_i$ be defined as
$\Win_0(s)=\top$ if $s$ is final and $\bot$ otherwise; and 
$\Win_{i+1}(s) = \Win_{i}(s) \vee \PreC_{\Win_{i}}(s)$.
% \begin{xalignat*}{2}
% \Win_0(s) &= \begin{cases} \top & \text{if $s$ is final}\\ \bot & \text{otherwise}\end{cases} \\
% \Win_{i+1}(s) &= \Win_{i}(s) \vee \PreC_{\Win_{i}}(s)
% \end{xalignat*}
Finally, $\Win(\ANDOR) = \bigvee_{i\geq 0} \Win_{i}(s_0)$.
\end{definition}

The fixpoint $\Win(\ANDOR)$ need not exist in general, but we will show that it does in a variety of relevant cases.

The concepts of controllable preimage and winning strategy are inspired by~\cite{deGiacomoV15}, but there the controllable preimage and $\Win_i$ are sets of DFA states, and since this set is finite the fixpoint always exists. In the first-order setting, the notion of a state gets replaced by that of a configuration, and one needs to reason on configuration mappings as the number of configurations is in general infinite.

\begin{example}
\label{exa:preimage}
Consider the AND-OR-graph in Ex.~\ref{exa:andor}.
Let $s_0$ be the initial node labeled $\psi$, $s_1$ the AND-node labelled $\psitwo$, and $s_2$ the one labelled $\psione$.
We have 
\begin{compactitem}
\item
$\Win_0(s_f)\,{=}\,\top$ for the final node $s_f$, and $\Win_0(s){=}\bot$ otherwise.
\item
Next, $\Win_1(s_0)=\Win_1(s_2) = \bot$ but\\
$
\Win_1(s_1){=}\left (
\begin{array}{@{}l@{}l@{}l@{}}
&\forall x.& 
 (x \geq 0 \wedge x {-} \pre x\,{\leq}\, 2 \to \\
 &&(\exists y.\pre y\,{>}\, x \wedge \top) \vee (\exists y.\pre y\,{\leq}\, x \wedge \bot)) \wedge {} \\
 &\forall x. &(x\,{<}\,0 \vee x {-} \pre x\,{>}\, 2 \to \exists y.\top)
\end{array}\right )\!\![x,y]
$\\
which simplifies to $\Win_1(s_1) = y > x+2 \vee x < -2$.
\item
Now $\Win_2(s_2)$ is still $\bot$, $\Win_2(s_1) =\Win_1(s_1)$, and \\
$
\Win_2(s_0) = \left (
\begin{array}{@{}l@{}l@{}l@{}}
\forall x. (x\,{<}\,0 \to \exists y. \top) \wedge {} \\\forall x. (x\,{\geq}\,0 \to \exists y. (y > x+2 \vee x\,{<}\,{-}2)) 
\end{array} \right)[x,y]
$\\
which simplifies to $\top$.
\end{compactitem}
At this point, $\Win(\ANDOR)=\top$ can be concluded.
\end{example}

Below we use the following notion of a restricted form of realizability where the length of the satisfying trace is bounded:

\begin{definition}
\label{def:bounded:realizable}
A property $\psi \in \LL$ is \emph{boundedly realizable} if
% there is some $m$ such that for all $\pi=\beta_1,\beta_2,\beta_3,\dots$ there are some $i \leq m$ and assignments $\gamma_1, \gamma_2, \dots, \gamma_i$ such that the trace $\tau = \langle \beta_1 \cup \gamma_1, \dots, \beta_i \cup \gamma_i\rangle$ satisfies $\tau \models \psi$.
there are some strategy $h:\assign{X}^* \to \assign{Y}$ and some $m>0$ such that for all $\pi=\beta_0,\beta_1,\beta_2, \dots$ there is some $i\leq m$ such that the trace $\tau = \langle\beta_0 \cup h(\pi|_1), \dots ,\beta_i \cup h(\pi|_{i+1})\rangle$ satisfies $\tau \models \psi$.
\end{definition}

We call $\Win(\ANDOR)$ \emph{satisfiable} we assume implicitly that the fixpoint is well-defined and $\Win(\ANDOR)$ is a satisfiable formula.
Below we will define a strategy $\mystrat$ that admits the following, key result:

\begin{thm}
\label{thm:correctness}
Let $\psi\in\LL$ be well-formed with AND-OR-graph $\ANDOR$ with initial state $s_0$.
\begin{inparaenum}[(i)]
\item
If $\Win_m(s_0)$ is satisfiable then $\psi$ is boundedly realizable by the strategy $\mystrat$.
\item
If $\psi$ is boundedly realizable with bound $m$ then $\Win_m(s_0)$ is satisfiable.
\item
If the fixpoint $\Win(\ANDOR)$ is defined but unsatisfiable then $\psi$ is not boundedly realizable.
\end{inparaenum}
\end{thm}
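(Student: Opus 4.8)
The plan is to reduce all three claims to a single, precisely-stated invariant linking the configuration mappings $\Win_i$ to bounded realizability "from a node", and then prove that invariant by induction on $i$. Concretely, I would first define, for an AND-node $s$ of $\ANDOR$ and an assignment $\alpha$, what it means for the agent to have a strategy that, starting from configuration $(s,\alpha)$ (treating $\alpha$ as the "previous" assignment feeding the lookback variables), drives $\ANDOR$ to the final node within $i$ further steps against every environment. Call this predicate $\mathsf{BWin}_i(s,\alpha)$. The heart of the argument is the claim
\[
\alpha \models \Win_i(s) \quad\Longleftrightarrow\quad \mathsf{BWin}_i(s,\alpha),
\]
for every AND-node $s$ and every $i \ge 0$. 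The base case $i=0$ is immediate from the definition of $\Win_0$ and the fact that the final node, if it exists, is unique and labelled $\top$. For the inductive step I would unfold $\Win_{i+1}(s) = \Win_i(s) \vee \PreC_{\Win_i}(s)$: the first disjunct is handled by the induction hypothesis (a strategy winning in $\le i$ steps is a fortiori one winning in $\le i{+}1$ steps), and the second disjunct is exactly the game-theoretic "one move then recurse" statement. Here the nested $\forall X \ldots \exists Y$ structure of $\PreC$ mirrors the environment-first move discipline of \defref{synthesis}, and the renaming $[\pre V/V]$ together with the combining of assignments via $\combine{\cdot}{\cdot}$ records that the guards $w_i$ and $\varsigma_j$ are evaluated on the combined assignment while $\Conf(s_j)$ is a state formula about the new assignment. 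The bridge from paths in $\ANDOR$ to actual satisfaction of $\psi$ is \lemref{andor} together with \thmref{progression:satisfaction}: a path $s_0 \gotos{\rho} s_\ell$ to a final node, witnessed by a sequence of atom sets $\sigma$ satisfying $\rho$, exactly corresponds to a trace $\tau$ consistent with $\rho$ (hence with $\sigma = \seq_\psi(\tau)$) such that $\psi^+(\sigma) = \top$, i.e. $\tau \models \psi$.

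Granting the invariant, the three parts follow quickly. For (i): if $\Win_m(s_0)$ is satisfiable, pick $\alpha$ with $\alpha \models \Win_m(s_0)$; since $\psi$ is well-formed, the atoms relevant at instant $0$ do not mention $\pre V$, so the value of $\alpha$ on the lookback slot is immaterial at the root and $\mathsf{BWin}_m(s_0,\alpha)$ gives a strategy $\mystrat$ (the one extracted from the $\exists Y$-witnesses in the unfolding of $\PreC$, memoised along the environment's history) that realizes $\psi$ within $m$ steps — this is precisely \defref{bounded:realizable}. For (ii): conversely, a witnessing strategy $h$ with bound $m$ for $\psi$, read through \lemref{andor} and \thmref{progression:satisfaction}, yields $\mathsf{BWin}_m(s_0,\alpha)$ for the (irrelevant) initial $\alpha$, hence $\alpha \models \Win_m(s_0)$, so $\Win_m(s_0)$ is satisfiable. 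For (iii): if the fixpoint $\Win(\ANDOR) = \bigvee_{i \ge 0}\Win_i(s_0)$ is defined and unsatisfiable, then in particular $\Win_m(s_0)$ is unsatisfiable for every $m$; by the contrapositive of (ii), $\psi$ is not boundedly realizable for any bound $m$, i.e. not boundedly realizable at all.

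The main obstacle, and where I would spend the most care, is the inductive step of the invariant — specifically, showing that the purely syntactic quantifier pattern in $\PreC_\Conf(s)$ faithfully encodes one round of the game. Two subtleties need attention. First, the decomposition structure of \defref{andor}: an AND-node has children obtained via $\decomp(\cdot,C_\env)$ (the environment's $w_i$, which by construction only mention $X \cup \pre X \cup \pre Y$, hence are fixed once the environment moves) and each OR-node fans out via $\decomp(\cdot,C_\ag)$ (the agent's $\varsigma_j$); one must check that the disjointness/covering properties (2)–(4) of $\decomp$ make the $\bigwedge_i$ over AND-edges and $\bigvee_j$ over OR-edges line up with "for all environment moves" and "exists an agent move" without double-counting or gaps. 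Second, the lookback bookkeeping: the formula $\PreC_\Conf(s)$ has free variables in $V$ (after the $[\pre V/V]$ substitution), and one must verify that when it is satisfied by an assignment $\alpha$, the witness assignments it asserts to exist, combined via $\combine{\alpha}{\cdot}$ with the environment's next $X$-values, indeed satisfy the guards on the outgoing edges, so that the path actually exists in $\ANDOR$. Once these two points are nailed down, the rest is a routine unwinding; the worked computation in \exaref{preimage} is a useful sanity check that the quantifier structure collapses to the expected formulas.
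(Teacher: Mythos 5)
Your proposal is correct and follows essentially the same route as the paper: your invariant relating $\Win_i(s)$ to the existence of a winning strategy from the configuration at node $s$ is precisely the paper's technical \lemref{toll}, proved there by the same induction on the step bound, with the same use of \lemref{andor} and \thmref{progression:satisfaction} in the base case and the same unfolding of $\PreC$ against the environment-first quantifier pattern in the inductive step. The derivation of items (i) and (ii) by instantiating at $s_0$ with an empty prefix, and of (iii) as the contrapositive of (ii), matches the paper exactly.
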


Next we define how the formula $\Win(\ANDOR)$, if satisfiable, gives rise to a strategy $g$ to synthesize $\psi$.
For an assignment $\beta$ with domain $X$, let $\pre \beta$ be the assignment with domain $\pre X$ s.t. $\pre \beta(\pre x) = \beta(x)$ for all $x\in X$, and for assignments $\alpha$, $\gamma$ with disjoint domains we write $\alpha \cup \gamma$ for the combined assignment with domain $\dom(\alpha) \cup \dom(\gamma)$.

\begin{definition}[Strategy $\mystrat$]
\label{def:strategy}
Let $\psi$ be well-formed and $\ANDOR$ its AND-OR-graph with initial node $s_0$ such that $\Win_K(s_0)=\Win_{K+1}(s_0)$ for some $K$ and this formula is satisfiable by a $\TT$-model $M$.
For $\pi =\beta_0, \beta_1, \beta_2, \dots\in \assign{X}^*$ the environment assignments, we define some $\len$,
$0 \leq \len\leq K$,
AND-nodes $s_k$ of $\ANDOR$ %for all $0 \leq k \leq \len(\pi)$,
and
assignments $\mystrat(\pi|_{k})$ with domain $Y$ for all $0 < k \leq \len$, by induction on $k$:
\footnote{We assume $\psi \neq \top$, otherwise $\len{=}0$.}
% \begin{compactenum}[(1)]
% \item
% If $k{=}0$ then $s_0$ is the initial node of $\ANDOR$.
% If $s_{0}$ is final then $\len(\pi){=}0$.
% \item
% Suppose $k > 0$.
% First, $s_0$ is the initial node of $\ANDOR$.

Let $\eta_k$ be the assignment combining values from the previous instant and the environment assignment, defined as:
\begin{inparaenum}[(i)]
\item
If $k\,{=}\,0$ then $\dom(\eta_0)\,{=}\,X$ and
$\eta_0=\beta_0$.
\item
If $k > 0$ then $\dom(\eta_k) = X\cup \pre X \cup \pre Y$ and
$\eta_k = \beta_k \cup \pre \beta_{k-1} \cup \pre\gamma_{k-1}$, where $\gamma_{k-1}=\mystrat(\pi|_{k})$.
\end{inparaenum}

Let $s_{k} \goto{w_i} t_i$ be the unique edge from $s_{k}$ % in $\ANDOR$,
such that $\eta_k \models w_i$.
For $t_i \goto{\varsigma_j} u_{j}$, $1 \leq j \leq n_i$ all edges from $t_i$,
let $j$ be such that the formula with free variables $Y$:
\begin{equation}
\label{eq:strategy}
\eta_k(\varsigma_j \wedge \Win_{K-k}(u_{j}))
\end{equation}
is satisfiable in $M$. Let $\mystrat(\pi|_{k+1})$ be a satisfying assignment for it, and set the node $s_{k+1}$ to $u_j$.
If $s_{k+1}$ is final then $\len=k+1$.
% \end{compactenum}
\end{definition}

% We can show that under the assumptions of \defref{strategy}, the strategy $g$ is well-defined.

\noindent
% It can be shown that $g$ is well-defined:

\begin{restatable}{lemma}{lemmag}
The strategy $g$ is well-defined.
\end{restatable}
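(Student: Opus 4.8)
The plan is to verify, by induction on $k$, that the construction in \defref{strategy} never gets stuck: for every $k$ with $0 \le k < \len$, the assignment $\eta_k$ selects a unique edge $s_k \goto{w_i} t_i$, and the disjunction \eqref{eq:strategy} over the children of $t_i$ is satisfiable in $M$ for at least one $j$. The base case is $k=0$: here $\dom(\eta_0)=X$, the outgoing edges of $s_0$ are labelled with the primitives $\prim_1,\dots,\prim_m$ of $\decomp(\xnf(\psi),C_\env)$, and by conditions (3)--(4) of the decomposition these are disjoint and covering, so exactly one $w_i$ is satisfied by any extension of $\eta_0$ to the atoms in $C_\env$ — but one has to be a little careful, since $\eta_0$ only assigns $X$, whereas atoms in $C_\env$ may also mention $\pre X \cup \pre Y$; at instant $0$, however, well-formedness of $\psi$ guarantees that the relevant atoms (those in $\ta(\xnf(\psi))$ that are not under a next-operator) contain no lookback, so the truth value of every $w_i$ is in fact determined by $\eta_0$ alone. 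The inductive step is analogous: $\eta_k$ has domain $X\cup\pre X\cup\pre Y$, which is exactly the set of variables that may occur in atoms of $C_\env$, so again disjointness and covering of the $\prim_i$ pin down a unique edge.

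For the existence of a suitable $j$ in \eqref{eq:strategy}, the key is the invariant that $\eta_k \models \Win_{K-k}(s_k)$, maintained along the run. This holds at $k=0$ because $\Win_K(s_0)$ is satisfiable in $M$ and $\Win_K=\Win_{K+1}=\Win_{K-0}$, so we may take $M$ (restricted to $X$, extended arbitrarily on the rest) as a model; more carefully, $\eta_0$ is chosen from the environment, so one instead argues that \emph{whatever} $\beta_0$ is, $\Win_K(s_0) \equiv \top$ must hold on the relevant variables, or one threads the assumption that $M \models \Win_K(s_0)$ together with the $\forall X$ in the definition of $\PreC$. Concretely: from $M\models\Win_{K-k}(s_k)$ and the shape $\Win_{i+1}=\Win_i\vee\PreC_{\Win_i}$, either $s_k$ is already final (and then $\len\le k$, contradicting $k<\len$, or we stop), or $M\models\PreC_{\Win_{K-k-1}}(s_k)$, which after substituting the concrete environment values $\eta_k$ into the $\forall X$ says precisely that for the unique triggered edge $w_i$ there is a child $t_i$ with $M\models\PreC'_{\Win_{K-k-1}}(t_i)$ under $\eta_k$, i.e. some disjunct $\exists Y.(\varsigma_j\wedge\Win_{K-k-1}(u_j))$ is satisfied. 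That is exactly the satisfiability of \eqref{eq:strategy} for that $j$, and a satisfying $Y$-assignment is $\gamma_k=\mystrat(\pi|_{k+1})$; moreover $\eta_{k+1}$ (built from $\beta_{k+1}$, $\pre\beta_k$, $\pre\gamma_k$) then satisfies $\Win_{K-(k+1)}(u_j)=\Win_{K-(k+1)}(s_{k+1})$ up to the $\forall X$ binding $\beta_{k+1}$, re-establishing the invariant and closing the induction. Finally $\len \le K$ since $\Win_{K-k}$ becomes $\Win_0$ once $k=K$, forcing $s_K$ to be final (otherwise $\Win_0(s_K)=\bot$ is unsatisfiable, contradicting the invariant), so the construction terminates.

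The main obstacle I expect is the bookkeeping around \emph{which variables are free where} and the interplay of the $\forall X$/$\exists Y$ quantifiers in $\PreC$ with the partial assignment $\eta_k$. One must be precise that $\eta_k$ ranges exactly over the variables that $\PreC_\Conf(s)$ quantifies away plus those that survive, so that $\eta_k(\varsigma_j\wedge\Win_{K-k}(u_j))$ is genuinely a formula over $Y$ only, and that the renaming $[\pre V/V]$ resp.\ $[V/\pre V]$ in the definition of $\PreC$ lines up with the shift from instant $k$ to instant $k+1$ (the previous instant's $V$-values become the current $\pre V$-values). A secondary subtlety is the well-formedness hypothesis on $\psi$: it is exactly what is needed for the base case so that no atom evaluated at instant $0$ refers to lookback variables that $\eta_0$ does not assign; this should be stated explicitly as the reason the $k=0$ case of edge-selection is well-defined. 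Once these points are nailed down, everything else is a routine unwinding of the definitions of $\PreC$, $\Win_i$, and \defref{strategy}.
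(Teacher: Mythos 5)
Your proposal is correct and follows essentially the same route as the paper's proof: a unique edge $w_i$ exists because the decomposition primitives labelling outgoing edges of AND-nodes are covering and disjoint, and satisfiability of \eqref{eq:strategy} is established by induction on $k$ via the invariant that the accumulated assignment satisfies $\Win_{K-k}(s_k)$, unfolding $\PreC$ after instantiating the $\forall X$ with the concrete environment values. Your termination argument (the invariant forces $s_K$ to be final since $\Win_0$ is $\bot$ on non-final nodes) is a slightly more direct phrasing of the paper's appeal to the existence of the fixpoint, but it is the same idea.
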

\begin{proof}[Proof (sketch)]
For all $k$ a unique $w_i$ exists since by construction of $\ANDOR$
the set of outgoing edges from AND-nodes are covering and disjoint.
Second, satisfiability of formula
\eqref{eq:strategy} in $M$ can be shown by induction on $k$, using the fact that $\Win_{K-k}(s_k)$ must be defined.
Finally, the fixpoint $\Win(\ANDOR)$ exists so there must be some $m\leq K$ s.t. $s_{m}$ is labelled $\top$, so $\len$ is defined.
\end{proof}

%%%%%%%%%%%%%%%%%%%%%%%%%%%%%%

The proof of \thmref{correctness} uses the following technical lemma, which relates trace prefixes leading to an AND-node $s$ in $\ANDOR$ to correctness of the strategy $\mystrat$ ``starting'' from that node $s$.
For $\alphas=\tup{\alpha_0,\dots,\alpha_{k-1}}$, we denote by $\alphas|_X$ the sequence $\tup{\alpha_0|_X,\dots,\alpha_{k-1}|_X}$

\begin{lemma}
\label{lem:toll}
Let $\ANDOR[\psi] = (S_\land, S_\lor, E, s_0, \lab)$ be an AND-OR-graph for a well-formed $\psi$, $s_0 \gotos{\rho} s$ for some $s\in S_\land$, and $\tau=(M,\alphas)$ a trace consistent with $\rho$, for $\alphas=\tup{\alpha_0,\dots,\alpha_{k-1}}$.
Let $\alpha_{pre}$ be $\alpha_{k-1}$ if $k>0$ and $\emptyset$ otherwise.
\begin{compactenum}
\item
If $M, \alpha_{pre} \modelsT \Win_m(s)$ for some $m$ then for all
$\pi=\beta_{k},\beta_{k+1},\beta_{k+2}, \dots$ there is some $i \leq m$  such that
$\tau' = \langle\beta_{k}\cup\gamma_{k}, ..., \beta_{k+i} \cup \gamma_{k+i}\rangle$ satisfies $\tau\tau' \models \psi$,
where $\gamma_{k+j}=g(\alphas|_X\cdot\pi|_{j+1})$ for all $0 \leq j \leq i$.
\item
Suppose there are some strategy $h\colon \assign{X}^* \to \assign{Y}$ and $m \geq 0$ such that
for all $\pi=\beta_{k},\beta_{k+1},\beta_{k+2}, \dots$ there is some $i \leq m$  such that
$\tau' = (M,\langle\beta_{k}\cup\gamma_{k}, ..., \beta_{k+i} \cup \gamma_{k+i}\rangle)$ satisfies $\tau\tau' \models \psi$,
where $\gamma_{k+j}=h(\alphas|_X,\beta_{k},...,\beta_{k+j})$ for all $0 \leq j \leq i$. Then $M, \alpha_{pre}\modelsT \Win_m(s)$.
\end{compactenum}
%
% Then $M, \widehat \alpha \modelsT \Win_m(s)$ iff
% % NOTE TO MYSELF:
% % the following formulation is too lax as respective gamma exist also for unrealizable specs with an oracle that can foresee the future
% % for all $\pi = \beta_{k+1}, \beta_{k+2}, \dots$ there are some $i \leq m$ and assignments $\gamma_{k+1}, \gamma_{k+2}, \dots ,\gamma_{k+i}$ such that $\tau' = \langle \beta_{k+1} \cup \gamma_{k+1},  \dots ,\beta_{k+i} \cup \gamma_{k+i}\rangle$ satisfies $\tau\tau' \models \psi$.
% there is some strategy $h\colon \assign{X}^* \to \assign{Y}$ which induces
% for all $\pi=\beta_{k+1},\beta_{k+2},\beta_{k+3}, \dots$  some $i \leq m$  such that
% $\tau' = \langle\beta_{k+1}\cup\gamma_{k+1}, ..., \beta_{k+i} \cup \gamma_{k+i}\rangle$ satisfies $\tau\tau' \models \psi$,
% where $\gamma_{k+j}=h(\alpha_0|_X, ...,\alpha_{k-1}|_X,\beta_k+1,...,\beta_k+j)$ for all $1 \leq j \leq i$.
\end{lemma}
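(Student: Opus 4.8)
The plan is to prove both parts by induction on $m$, in lockstep with the recursive definitions of $\Win_m$ and of the strategy $g$. The base case $m=0$ is the anchor: $\Win_0(s)=\top$ exactly when $s$ is final, i.e. $\lab(s)=\top$. By \lemref{andor}(1) and \thmref{progression:satisfaction}, if $s$ is final and $s_0 \gotos{\rho} s$, then any trace consistent with $\rho$ satisfies $\psi$ already (taking $i=0$, $\tau'=\tup{}$, so $\tau\tau'=\tau\models\psi$); this gives part~1. Conversely, if the empty continuation already witnesses $\tau\models\psi$ for \emph{every} environment continuation with $m=0$, then in particular $\tau\models\psi$, and by \thmref{progression:satisfaction} and \lemref{andor}(2) the path $\rho$ must end in a final node, so $s$ is final and $\Win_0(s)=\top$; this gives part~2. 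The only subtlety here is handling $k=0$ versus $k>0$ uniformly via $\alpha_{pre}$, and checking that well-formedness of $\psi$ is what makes $\seq_\psi(\tau)$ satisfy $\rho$ in the propositional sense needed to invoke \lemref{andor}.

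For the inductive step of part~1, assume $M,\alpha_{pre}\models_\TT \Win_{m+1}(s)$. Since $\Win_{m+1}(s)=\Win_m(s)\vee\PreC_{\Win_m}(s)$, either $\alpha_{pre}$ satisfies $\Win_m(s)$ — in which case the induction hypothesis applies directly (with the same bound, hence with bound $m+1$) — or $\alpha_{pre}$ satisfies $\PreC_{\Win_m}(s)$. In the latter case, unfold the definition of $\PreC$: for the environment's first move $\beta_k$, the assignment $\eta_k$ (combining $\alpha_{pre}$ as lookback values with $\beta_k$) satisfies the guard $w_i$ of the unique outgoing edge $s\goto{w_i}t_i$, and the conjunct $\forall X.(w_i\to\PreC'_{\Win_m}(t_i))$ of $\PreC_{\Win_m}(s)$ then forces $\eta_k\models\PreC'_{\Win_m}(t_i)$. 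Unfolding $\PreC'$, there is a child $u_j$ with edge $t_i\goto{\varsigma_j}u_j$ and a choice of $Y$-values $\gamma_k$ (which is precisely what $g$ picks by \eqref{eq:strategy}, since $g$ is well-defined) such that $\eta_k\cup\gamma_k\models\varsigma_j$ and the resulting state assignment $\alpha_k$ satisfies $\Win_m(u_j)$. Now one step has been consumed: $\rho$ extends by $w_i\wedge\varsigma_j$, the extended trace $\tau\cdot\tup{\alpha_k}$ is consistent with it, the new $\alpha_{pre}$ is $\alpha_k$, and $M,\alpha_k\models_\TT\Win_m(u_j)$. Apply the induction hypothesis at node $u_j$ with bound $m$ to obtain a continuation of length $\le m$ from there; prepending the one step gives a continuation of length $\le m+1$ for $s$, and the composed trace satisfies $\psi$. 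The matching of $\gamma$'s requires care — the strategy $g$ only sees the $X$-projection of the history, which is exactly the content of the $[\pre V/V]$ renaming in $\PreC$ and of writing $\gamma_{k+j}=g(\alphas|_X\cdot\pi|_{j+1})$ — so one must verify that the $g$-values produced along the recursive call agree with those dictated by $g$ on the full history; this follows because $g$'s definition only ever inspects $\eta$-assignments, which in turn depend only on past $X$-values and past $g$-outputs.

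For the inductive step of part~2, assume a strategy $h$ and bound $m+1$ witness realizability from $s$. Consider the environment's first move $\beta_k$ as a free parameter; it determines $\eta_k$ and hence (since $\ANDOR$'s AND-node edges are disjoint and covering) a unique edge $s\goto{w_i}t_i$ with $\eta_k\models w_i$. The agent's response $h(\alphas|_X,\beta_k)$ is some $\gamma_k$; by the decomposition properties of $\ANDOR$ (OR-node edges covering and disjoint on $C_\ag$) there is a unique child $u_j$ with $\eta_k\cup\gamma_k\models\varsigma_j$. After this step, $h$ restricted appropriately is a strategy witnessing realizability from $u_j$ with bound $m$ (if the length-zero continuation already worked, bound $0\le m$; otherwise strip the first step), relative to the trace $\tau\cdot\tup{\alpha_k}$ consistent with the extended $\rho$. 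The induction hypothesis gives $M,\alpha_k\models_\TT\Win_m(u_j)$. Since this holds for the specific $\gamma_k$ chosen by $h$, we get $\eta_k\cup\gamma_k\models\varsigma_j\wedge\Win_m(u_j)$, hence $\eta_k\models\PreC'_{\Win_m}(t_i)$; as $\beta_k$ was arbitrary and $\eta_k\models w_i$ pins down $i$, we conclude $\eta_k\models\forall X.\bigwedge_i(w_i\to\PreC'_{\Win_m}(t_i))$ after re-universalizing, and after the $[\pre V/V]$ renaming this is exactly $\alpha_{pre}\models\PreC_{\Win_m}(s)$, so $\alpha_{pre}\models\Win_{m+1}(s)$.

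\textbf{Main obstacle.} The delicate point, in both directions, is the bookkeeping of \emph{which assignment plays which role}: the distinction between a state assignment $\alpha_k$ (domain $V$), the combined environment/lookback assignment $\eta_k$ (domain $X\cup\pre X\cup\pre Y$), and the renamings $[\pre V/V]$, $[V/\pre V]$ that the controllable preimage performs so that $\Win_m$ is always a state formula while $\PreC'$ quantifies over the ``current'' $X,Y$. Getting the quantifier alternation $\forall X.\,w_i\to\exists Y.\,\varsigma_j\wedge\Win_m$ to line up with ``for all environment moves, the strategy $g$ supplies a good agent move'' — and in particular verifying that $g$'s reliance on only the $X$-history is consistent with the recursion — is where the real work lies; the temporal/automata-theoretic content is fully delegated to \lemref{andor} and \thmref{progression:satisfaction}.
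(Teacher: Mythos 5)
Your proposal is correct and follows essentially the same route as the paper: induction on $m$ for both directions, with the base cases discharged via \lemref{andor} and \thmref{progression:satisfaction}, and the inductive steps by unfolding $\Win_{m}(s)=\Win_{m-1}(s)\vee\PreC_{\Win_{m-1}}(s)$, using the covering/disjointness of the AND- and OR-node edge labels to pick the unique successor, and re-applying the hypothesis at the child node on the one-step-extended $\rho$ and trace. The bookkeeping issue you flag (the roles of $\alpha_{pre}$, $\eta_k$, and the $[\pre V/V]$ renaming) is exactly where the paper's proof also spends its care, and your resolution matches it.
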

\begin{proof}
($\Longrightarrow$)
By induction on $m$.
If $m=0$ then $s$ must be final.
Let $\sigma=\seq_\psi(\tau)$.
By Lem.~\ref{lem:andor} (1), $\psi^+(\sigma) = \top$.
As $\tau$ is consistent with $\rho$, $\sigma$ satisfies $\rho$, so by \thmref{progression:satisfaction} $\tau \models \psi$, and the claim holds for $\tau'$ empty.

Let $m > 0$.
Since $M,\alpha_{pre} \models \Win_{m}(s)$, either $M,\alpha_{pre} \models \Win_{m-1}(s)$ or $M,\alpha_{pre} \models \PreC_{\Win_{m-1}}(s)$. In the former case, we conclude by the induction hypothesis, so suppose the latter.
% If $s$ is final then we can reason as in the base case, so assume that $s$ is not final and labelled $\psi'$.
Let $s$ be labeled $\psi'$.
There must be exactly one $w_i$ and $t_i$ with $s \goto{w_i} t_i$ such that $M,\eta_{k} \models w_i$, where $\eta_k$ is defined as in \defref{strategy}.
% If $k=0$, $w_i$ contains only variables $X$ by well-formedness, and $M,\beta_{k+1} \models w_i$, or
By definition of $\PreC$, it must holds that
$\eta_k \models \PreC'_{\Win_{m-1}}(t_i)$, so
there must be some $j$ such that
$\eta_k(\varsigma_j \wedge \Win_{m-1}(s_j))$
is satisfiable in $M$ by some $\gamma_{k} \in \assign{Y}$, as assigned by $g(\alphas|_X\cdot\pi|_{k+1})$.
Let $\widehat \tau = (M,\langle \beta_{k} \cup \gamma_{k}\rangle)$.
We have that $\tau\widehat \tau$ is consistent with $\rho$ extended with $s \goto{w_i}t_i \goto{\varsigma_j} s_j$, and we have $M,\beta_{k} \cup \gamma_{k} \models \Win_{m-1}(s_j)$.
By induction hypothesis, there are some $i < m$ and assignments $\gamma_{k+1}, \dots ,\gamma_{k+i}$ produced by $g$ such that for $\tau' = (M,\langle \beta_{k+1} \cup\gamma_{k+1}, \dots ,\beta_{k+i} \cup \gamma_{k+i}\rangle)$ we have $\tau\widehat \tau\tau' \models \psi$.
So the trace $\widehat \tau\tau'$ of length $i+1 \leq m$ satisfies the claim.
\smallskip

($\Longleftarrow$)
By induction on $m$.
If $m=0$ then $\tau \models\psi$.
By \thmref{progression:satisfaction}, $\sigma{=}\seq_\psi(\tau)$ satisfies $\psi^+(\sigma){=}\top$.
By \lemref{andor} (2), there are a unique $\rho'$ satisfied by $\sigma$ such that $\smash{s_0 \gotos{\rho'} s_\ell}$, and $s_\ell$ is final. So it must be $\rho = \rho'$ and $s=s_\ell$,
% Since $\tau$ is consistent with $\rho$, $\sigma$ satisfies $\rho$, so $s=s_\ell$ must be final and
thus $\Win_0(s)=\top$.

Let $m > 0$.
% If $s$ is final, $\Win_{m}(s)=\top$ and the claim holds.
% So suppose $s$ is not final.
If for all $\pi$ there exists an $i<m$ that satisfies the claim, we can conclude that $M,\alpha_{pre} \models \Win_{m-1}(s)$ by the induction hypothesis, and hence $M,\alpha_{pre} \models \Win_{m}(s)$ by definition of $\Win_{m}$.
Otherwise, we show that $M,\alpha_{pre} \models \PreC_{\Win_{m-1}}(s)$, by showing that
% $\widehat\alpha(\PreC_{\Win_{m-1}}(s))$ is satisfiable, which is the case iff each conjunct
% $\widehat\alpha_{pre}(\forall X.\: w_i \to  \PreC'_{\Win_{m-1}}(t_i))$
$M,\pre\alpha_{pre}\models \forall X.\: w_i \to  \PreC'_{\Win_{m-1}}(t_i)$
for all $s \goto{w_i} t_i$.
To this end, we show that
$M,\pre\alpha_{pre}\models \widehat\beta(\: w_i \to  \PreC'_{\Win_{m-1}}(t_i))$ (called ($\star$) below),
where $\widehat \beta$ is an arbitrary assignment with domain $X$.
First, if $\widehat\beta(w_i)$ is unsatisfiable in $M$, then ($\star$) is trivially satisfied.
Otherwise, consider some $\pi = \beta_{k}, \beta_{k+1}, \dots$ such that $\beta_{k}=\widehat\beta$. Then there are some $i \leq m$ and assignments $\gamma_{k}, \gamma_{k+1}, \dots ,\gamma_{k+m}$ such that $\tau' = \langle \beta_{k} \cup \gamma_{k}, \dots ,\beta_{k+i} \cup \gamma_{k+i}\rangle$ satisfies $\tau\tau' \models \psi$, where $\gamma_{k}, \dots \gamma_{k+m}$ are produced by strategy $h$.
Since the edge labels $\varsigma_1, \dots, \varsigma_n$ such that there is a transition $t_i \goto{\varsigma_j} s_{j}$, $1\leq j \leq n$ are covering and distinct, there must be a unique $\varsigma_j$ such that
$M,\pre \alpha_{pre} \cup \widehat\beta\cup \gamma_{k}\models \varsigma_j$.
Now for the trace $\tau\langle\beta_{k}\cup \gamma_{k}\rangle$ that is compatible with $\rho' = \rho\langle w_i \wedge \varsigma_j\rangle$ it holds that $s_0 \goto{\rho'} s_j$. Moreover, by assumption, for all $\beta_{k+1}, \beta_{k+2},\dots$, it holds that this sequence can be extended to a trace $\tau''$ of length $i<m$ such that $M,\tau\langle\beta_{k}\cup \gamma_{k}\rangle\tau'' \models \psi$.
Thus, by induction hypothesis, $M,\beta_{k}\cup \gamma_{k} \models \Win_{m-1}(s_j)$,
so that $\pre\alpha_{pre}$ satisfies ($\star$).
We conclude that $M,\alpha_{pre} \models \PreC_{\Win_{m-1}}(s)$, hence $M,\alpha_{pre} \models \Win_{m}(s)$.
\end{proof}

%%%%%%%%%%%%%

\noindent
At this point it is easy to prove \thmref{correctness}:
\begin{proof}
Items $(i)$ and $(ii)$ follow from Lem.~\ref{lem:toll}~(1) and (2) in the case of an empty trace $\tau$ and $s=s_0$.
Item $(iii)$ follows from $(ii)$.
\end{proof}

% From Ex.~\ref{exa:preimage} we can thus conclude that the property in Exa.~\ref{exa:andor} is realizable with bound 2.

\begin{example}
Assume that for the formula in Ex.~\ref{exa:andor},
the environment chooses assignments $\pi = \langle \{x\mapsto 3\},\{x \mapsto 4\}, \dots\rangle$.
\begin{compactenum}
\item
For $\pi|_1$, the corresponding node $s_1$ is labelled $\psitwo$.
By \exaref{preimage}, $\Win_1(s_1) = y>x+2 \vee x < 0$, so for $x=3$ one needs to choose a value of $y > 5$, e.g. $g(\pi|_1) = \{y \mapsto 6\}$.
\item
For $\pi|_2$, we have $s_2 = \top$, so $g(\pi|_2)$ is arbitrary.
\end{compactenum}
\end{example}

Note that for the case of propositional \LTLf, realizability is implicitly bounded by the number of states in the DFA. The same holds for the first-order case without lookback, which can be reduced to a boolean abstraction.
However, the following example shows that in the case of \LTLf modulo theories, realizability need not be bounded.

\begin{example}
\label{exa:unbounded}
Let $x$ and $u$ be controlled by the environment, and $y$ by the agent.
Consider the assumption $A = (u \geq 0) \wedge (x=0) \wedge \X \G(x = \pre x  + 1)$ and the property
$\psi = A \to (y=u) \wedge \F (x=y) \wedge \X\G (y=\pre y)$.
In words, the assumption is that $u$ is initially non-negative, $x$ is initially 0 and increases by 1 in every step;
and it is required that $y$ maintains the value of the environment-controlled $u$  throughout the trace, and has at some point the value of $x$.
This property is clearly realizable by the strategy that sets $y$ initially to the value of $u$ and keeps this value.
However, the length of the strategy is the initial value of $u$, which is arbitrarily large.
\end{example}

\section{Decidability}
\label{sec:decidability}

In this section, we identify fragments of \LTLfMT where the synthesis problem for $\psi$ is \emph{ solvable}, in the sense that
the realizability problem is decidable and the strategy $g$ from \defref{strategy} realizes $\psi$.

Precisely, we obtain decidability for the following four fragments:
\begin{inparaenum}[(1)]
\item Lookback-free properties over $\TT$ with decidable $\forall^*\exists^*$ fragment,
\item monotonicity constraints, i.e., variable-to-variable/constant comparisons over \LRA,
\item integer periodicity constraints, and 
\item bounded lookback properties, if satisfiability of formulas with a sufficient number of quantifier alternations are decidable in $\TT$ .
\end{inparaenum}
\medskip

We start with \textbf{lookback-free properties}, for which we obtain a similar decidability result as in~\cite[Thm. 2]{RodriguezS23}:

\begin{restatable}{thm}{thmnolookback}
\label{thm:no:lookback}
If satisfiability of $\forall^*\exists^*$ formulas is decidable in $\TT$,
the synthesis problem is solvable for lookback-free properties.
\end{restatable}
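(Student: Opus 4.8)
The plan is to show that for lookback-free properties, the fixpoint $\Win(\ANDOR)$ always exists and is computable whenever satisfiability of $\forall^*\exists^*$ formulas is decidable in $\TT$; then correctness follows from \thmref{correctness}. First I would observe that when $\psi$ contains no lookback variables, the atoms in $\foa(\psi)$ mention only variables in $V = X \uplus Y$, so every AND-node label produced by \defref{andor} is again lookback-free (the function $\remX$ does not introduce lookback), and the renaming $[\pre V/V]$ in the controllable preimage is vacuous on such formulas up to the bookkeeping of previous/current copies. Consequently each $\Win_i(s)$ is, up to renaming, a formula whose free variables range over a fixed finite set determined by $V$, and whose quantifier prefix over $X$ and $Y$ is bounded: $\PreC_\Conf(s)$ adds one $\forall X$ and one $\exists Y$ alternation on top of $\Conf$, but because the atoms are lookback-free, the inner $\Conf(s_j)$ formulas can be kept quantifier-free — indeed $\Win_0$ is $\top$ or $\bot$, and the $\exists Y$ can be eliminated or at least the resulting formula re-normalised into a quantifier-free state formula, since what matters for progression of a lookback-free property from one instant to the next is only the set of atoms made true, a finite amount of information.

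The key step is therefore: show that, modulo $\TT$-equivalence, only finitely many configuration mappings can arise as $\Win_i$. For this I would argue that each $\Win_i(s)$ is $\TT$-equivalent to a Boolean combination of atoms from $\foa(\psi)$ (after projecting out $Y$ via the $\exists Y$, which over a lookback-free atom set is a finite disjunction over the possible truth-assignments to the agent-controlled atoms, each of which is a $\TT$-satisfiability check). Since $\foa(\psi)$ is finite, there are only finitely many such Boolean combinations up to $\TT$-equivalence, so the monotone sequence $\Win_0(s_0), \Win_1(s_0), \dots$ stabilises: there is $K$ with $\Win_{K+1}(s_0) \equivT \Win_K(s_0)$, and more strongly $\Win_{K+1} \equivT \Win_K$ as configuration mappings. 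At that point $\Win(\ANDOR) = \Win_K(s_0)$ is well-defined. Detecting stabilisation and deciding satisfiability of the final formula both require only deciding $\TT$-satisfiability of formulas of the shape $\forall X.\, \exists Y.\, \theta$ with $\theta$ quantifier-free — i.e. the $\forall^*\exists^*$ fragment — which is decidable by hypothesis; the fixpoint check $\Win_{K+1} \equivT \Win_K$ reduces to the same, since both sides are (after the $\exists Y$ projection) quantifier-free and equivalence of quantifier-free formulas is $\TT$-satisfiability of their negated bi-implication.

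Given computability of $K$ and decidability of satisfiability of $\Win_K(s_0)$, \thmref{correctness}(i) gives that if it is satisfiable then $\psi$ is (boundedly) realizable by $\mystrat$, and (iii) gives that if it is unsatisfiable then $\psi$ is not boundedly realizable; and since in the lookback-free case realizability coincides with bounded realizability — the AND-OR-graph is finite, so any winning play can be taken to revisit no AND-node, bounding its length by $|S_\land|$ — we conclude that the realizability problem is decidable and $\mystrat$ realizes $\psi$ exactly when it is realizable. The main obstacle I anticipate is the lemma that $\Win_i(s)$ stays within a finite set of $\TT$-equivalence classes: one must be careful that the $\exists Y$ quantifier in $\PreC'$, composed over successive steps, does not cause an unbounded growth of quantifier alternations. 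The resolution is precisely the absence of lookback, which decouples instants: once the $\exists Y$ of a step is resolved, the only information passed to the next step is the resulting AND-node (finitely many) together with a quantifier-free residual state formula over $V$, so alternation depth never exceeds one $\forall\exists$ block, keeping everything inside the decidable $\forall^*\exists^*$ fragment and inside finitely many classes.
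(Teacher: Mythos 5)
Your overall plan (show that the fixpoint $\Win(\ANDOR)$ exists, keep every formula inside the decidable fragment, conclude via \thmref{correctness}) is the same as the paper's, and your closing observation that realizability coincides with bounded realizability in the lookback-free case is correct and worth making explicit. However, the key technical step has a genuine gap: you claim that each $\Win_i(s)$ can be kept quantifier-free, i.e.\ is $\TT$-equivalent to a Boolean combination of atoms of $\foa(\psi)$, by ``eliminating'' or ``projecting out'' the $\exists Y$. The hypothesis of the theorem is only that \emph{satisfiability} of $\forall^*\exists^*$ sentences is decidable in $\TT$; quantifier elimination is not assumed, so neither the $\exists Y$ nor the $\forall X$ can in general be discharged into a quantifier-free residual. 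Your fallback --- a finite disjunction over truth-assignments to the agent-controlled atoms, ``each of which is a $\TT$-satisfiability check'' --- does not work as stated either, because whether the agent can realize a given set of agent atoms depends on the concrete values of $X$ chosen by the environment, not merely on which environment atoms hold; making this precise is exactly the Boolean-abstraction machinery of \cite{RodriguezS23}, which needs its own $\forall^*\exists^*$ side conditions and is not set up in your sketch. You have also misidentified the shape of the formulas: in the lookback-free case the block $(\forall X.\ \dots \exists Y.\ \dots)[\pre V/V]$ binds \emph{all} variables, so each $\Win_i(s)$ is a \emph{closed} sentence, not a state formula with free variables in $V$.

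The paper exploits precisely this closedness: by induction each $\Win_i(s)$ is a closed $\forall^*\exists^*$ sentence, and when forming $\PreC_{\Win_i}$ the closed subformulas $\Win_i(s_j)$ are hoisted out of the quantifier prefix by replacing them with fresh Boolean variables $v_{s_j}$ and conjoining $v_{s_j} \leftrightarrow \Win_i(s_j)$ at the top level, so the whole formula stays (essentially) in the $\forall^*\exists^*$ fragment. Fixpoint existence then follows because in any fixed model each closed $\Win_i(s)$ is simply true or false, the sequence is monotone, and there are finitely many AND-nodes, so the iteration stabilises up to $\TT$-equivalence after at most $|S_\land|$ steps --- no finiteness of $\TT$-equivalence classes of quantifier-free formulas is needed. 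To repair your argument, either prove closedness and follow this hoisting route, or fully develop the Boolean abstraction with the required $\forall^*\exists^*$ checks; as written, the central finiteness lemma is not established.
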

% \begin{proof}[Proof (sketch)]
% A simple inductive argument shows that $\Win_i(s)$  is either $\top$ or $\bot$ for all $i \geq 0$, where $\PreC_{\Win_i}$ can be computed by checking $\TT$-satisfiability of a $\forall^*\exists^*$-formula, or validity of a $\exists^*\forall^*$-formula.
% The fixpoint computation must hence terminate.
% \end{proof}

The proof can be found in the appendix, it essentially shows that $\Win_i(s)$ can be expressed as a $\forall^*\exists^*$ sentence for all $i \geq 0$, and due to the absence of lookback, the fixpoint exists. Satisfiability of $\Win(\ANDOR)$ can thus be computed by checking $\TT$-satisfiability of a $\forall^*\exists^*$-formula (or equivalently, validity of an $\exists^*\forall^*$-formula).

% This generalizes the decidability result \cite[Thm.~2]{RodriguezS23}, where $\TT$-satisfiability of $\exists^*\forall^*$ formulas must be decidable.
Several theories $\TT$ fulfill the decidability assumption on $\forall^*\exists^*$ formulas in \thmref{no:lookback}, e.g. linear arithmetic and the Bernays-Sch\"onfinkel class~\cite{BS:MA:1928}.
Reactive synthesis of properties without lookback has e.g. been applied to shield synthesis  of neural networks~\cite{rodriguez25shield}.

We next focus on linear arithmetic theories:

\paragraph{Monotonicity constraints} (MCs) have been first considered for counter systems~\cite{DD07}. %, and are also known as monitonicity constraints (MCs)~\cite{FMPW23}.
In this case, all atoms in $\LL$ are variable-to-variable and variable-to-constant comparisons, i.e., of the form $p \odot q$ where $p,q\in {\mathbb Q\,{\cup}\,V}$
and $\odot$ is one of $=, \neq, \leq$, or $<$.
We call a property in $\LL$ over the theory \LRA where all atoms have this shape an \emph{MC property}.
For instance, $(x\,{>}\,\pre x) \U (x{=}y \wedge y{=}10)$ is an MC property, but the property from \exaref{intro} is not.
The following result is shown in a similar way as \cite[Thm.~33]{FMPW23}, exploiting the fact that quantifier elimination of MC formulas over $\mathbb Q$ produces MC formulas with the same constants. The proof can be found in the appendix.

\begin{restatable}{thm}{theoremMC}
\label{thm:mc}
The synthesis problem is solvable for MC properties.
\end{restatable}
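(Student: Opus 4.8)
\textbf{Proof plan for Theorem~\ref{thm:mc}.}
The plan is to show that the fixpoint $\Win(\ANDOR)$ exists for every MC property $\psi$, and that each $\Win_i(s)$ is an MC formula drawn from a fixed finite set. The crucial structural fact is that quantifier elimination over $\LRA$ restricted to MC formulas only ever produces MC formulas whose constants are among those already appearing in the input (this is the property used in~\cite[Thm.~33]{FMPW23}). First I would observe that all edge labels $w_i$ and $\varsigma_j$ in $\ANDOR[\psi]$ are conjunctions/disjunctions of atoms of $\psi$, hence MC formulas whose constants lie in the finite set $\mathcal{K}$ of constants occurring in $\psi$. Then, proceeding by induction on $i$, I would show that $\Win_i(s)$ is $\TT$-equivalent to a quantifier-free MC formula over variables $V$ (resp. $\pre V$ after the renaming in $\PreC$) with constants in $\mathcal{K}$: the base case $\Win_0(s)\in\{\top,\bot\}$ is trivial, and the step follows because $\PreC_{\Win_i}(s)$ is built from MC formulas by Boolean connectives and blocks of quantifiers $\forall X$ and $\exists Y$, so applying QE to eliminate those quantifiers yields, by the MC-preservation property, again a quantifier-free MC formula with constants in $\mathcal{K}$, and the disjunction with $\Win_i(s)$ keeps us inside the class.

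Next I would argue termination. Up to $\TT$-equivalence, there are only finitely many quantifier-free MC formulas over the fixed finite variable set $V$ with constants in the fixed finite set $\mathcal{K}$ (each such formula is, up to equivalence, a union of "regions" determined by the finitely many atoms $p\odot q$ with $p,q\in\mathcal{K}\cup V$, $\odot\in\{=,\neq,\leq,<\}$). Since $\Win_{i+1}(s)\models\Win_i(s)$ fails in general — rather $\Win_i(s)\models\Win_{i+1}(s)$ — the sequence $(\Win_i(s))_i$ is monotonically increasing in each coordinate $s\in S_\land$, and $S_\land$ is finite, so the whole tuple $(\Win_i(s))_{s\in S_\land}$ stabilises after finitely many steps; at that point $\Win_K(s_0)=\Win_{K+1}(s_0)$ and the fixpoint $\Win(\ANDOR)=\Win_K(s_0)$ is well-defined. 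By Theorem~\ref{thm:correctness}, satisfiability of this formula decides bounded realizability, and since for MC properties bounded and unbounded realizability coincide — which I would justify by noting that the fixpoint, once reached, certifies a finite bound $K$ on the strategy length via the construction in Definition~\ref{def:strategy} — this decides realizability. Satisfiability of $\Win(\ANDOR)$, being a quantifier-free \LRA{} formula, is decidable, and the strategy $g$ from Definition~\ref{def:strategy} (whose witnesses at each step are satisfying assignments of quantifier-free \LRA{} formulas, again effectively computable) realizes $\psi$.

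The main obstacle I expect is the closure argument in the inductive step: one must verify carefully that the specific shape of $\PreC_\Conf$ — namely $\bigwedge_i(\forall X.\,(w_i\to\bigvee_j\exists Y.\,(\varsigma_j\wedge\Conf(s_j))))[\pre V/V]$ — does not escape the MC class after QE. The inner matrix $\varsigma_j\wedge\Conf(s_j)$ is MC, but $\exists Y$ eliminating only the $Y$-variables, then $w_i\to(\cdots)$ re-introducing a disjunction, then $\forall X$ eliminating the $X$-variables, must each be shown to preserve both the MC atom-shape and the constant set $\mathcal{K}$; the subtlety is that intermediate formulas mix free variables in $\pre Y$ (treated as ordinary variables at this stage) with bound ones, and one must confirm the QE procedure for \LRA{} restricted to such formulas still yields difference-bound-style atoms rather than atoms with coefficients. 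I would handle this by citing the precise QE-preservation lemma underlying~\cite[Thm.~33]{FMPW23} and checking it applies verbatim here, since the renaming $[\pre V/V]$ and the Boolean structure are innocuous. The remaining bookkeeping — finiteness of the MC-formula quotient, monotonicity of $\Win_i$, and transfer from bounded to unbounded realizability — is routine given the framework already established.
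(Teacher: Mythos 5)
Your proposal matches the paper's proof in all essentials: both define the finite (up to equivalence) class of quantifier-free MC formulas over $V$ with constants from $\psi$, show by induction on $i$ that each $\Win_i(s)$ stays in this class because Fourier--Motzkin-style quantifier elimination over \LRA preserves MC shape and constants, conclude that the fixpoint exists by finiteness, and invoke \thmref{correctness}. Your explicit monotonicity observation ($\Win_i(s)\models\Win_{i+1}(s)$) is a slightly more careful justification of stabilisation than the paper's appeal to finiteness alone, but it is the same argument.
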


For \LTL over infinite traces, a respective result is shown in \cite[Thm. 10]{BhaskarP24}.
MC properties are expressive enough to capture practical applications:
E.g., the specification of a fluxgate magnetometer of a Swift UAS system run by NASA can be modelled via \LTLfMT properties over MCs~\cite[Table 2]{GeistRS14}.
Data Petri nets (DPNs) with MC transition guards are also a popular model in business process management (BPM)
 that can be mined automatically from event logs~\cite{LM18}.
In BPM, processes typically involve multiple, possibly non-cooperating parties, so that reactive synthesis is highly relevant in this context.
% 
% \begin{proof}[Proof (sketch)]
% This is shown as in~\cite[Thm.~5.2]{FMPW23}, exploiting that quantifier elimination of MC formulas over $\mathbb Q$ produces MC formulas with the same  constants, so that the history formulas in $\ANDOR$ range over a finite set.
% \end{proof}

\paragraph{Integer periodicity constraints} (IPCs) are atoms in the theory \LIA that somewhat resemble MCs, but allow equality modulo while variable-to-variable comparisons with $<$ and $\leq$ are excluded~\cite{Demri06}.
Precisely, IPC atoms have the form $x = y$, $x \odot d$ for $\odot \in \{=,\neq, <, >\}$, $x \equiv_k y + d$, or $x \equiv_k d$, for variables $x,y$ with domain $\mathbb Z$ and $k,d\in \mathbb N$.
We call $\psi \in \LL$ an \emph{IPC property} if all atoms in $\psi$ are IPCs.
For instance, $\F\G(x\,{=}\,\pre x) \wedge \G(y \equiv_3 0)$ is an IPC property, but $\G(x>y)$ is not.
The next result is proven similarly as Thm.~\ref{thm:mc}, as formulas over IPCs are closed under quantifier elimination~\cite{Demri06}.

\begin{restatable}{thm}{theoremIPC}
\label{thm:ipc}
The synthesis problem is solvable for IPC properties.
\end{restatable}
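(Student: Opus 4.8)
The plan is to mirror the proof strategy announced for \thmref{mc}, replacing the use of quantifier elimination for monotonicity constraints over $\mathbb{Q}$ by the analogous closure property for integer periodicity constraints over $\mathbb{Z}$. The key structural observation is that the synthesis procedure only ever manipulates formulas via the controllable preimage operator $\PreC$, which introduces quantifiers of the form $\forall X$ and $\exists Y$, together with boolean connectives and the variable renamings $[\pre V/V]$ and $[V/\pre V]$. Hence, if the class of IPC formulas is closed under these operations, every $\Win_i(s)$ is equivalent to a quantifier-free IPC formula, and the fixpoint computation stays within a finite set of formulas up to $\TT$-equivalence.

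First I would fix the relevant closure result: by~\cite{Demri06}, the first-order theory of integer periodicity constraints admits quantifier elimination, and crucially the elimination procedure produces formulas whose divisibility moduli $k$ and numeric constants $d$ are drawn from a finite set determined by the input (the moduli only need to be combined by least common multiples of those already present, and the absolute values of the constants stay bounded). I would make this bound explicit: starting from $\psi$, let $K$ be the l.c.m.\ of all moduli occurring in $\foa(\psi)$ and let $D$ bound the absolute values of all constants in $\foa(\psi)$; then all atoms appearing in any $\xnf$, $\remX$, or $\decomp$ image, and in any $\PreC_{\Win_i}$, are IPC atoms over variables in $V\cup\pre V$ with moduli dividing $K$ and constants bounded (after normalization) by a fixed function of $D$ and $K$. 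Since there are only finitely many non-$\TT$-equivalent quantifier-free formulas built from this finite atom set, the family $\{\Win_i(s_0)\}_{i\geq 0}$ stabilizes, so the fixpoint $\Win(\ANDOR)$ exists.

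With existence of the fixpoint established, solvability follows in the same way as for the lookback-free and MC cases: one computes $\Win_0(s), \Win_1(s), \dots$ for all AND-nodes $s$, at each step applying quantifier elimination to keep the formulas quantifier-free, and stops as soon as $\Win_{K}(s_0) \equivT \Win_{K+1}(s_0)$, which must happen. Decidability of $\TT$-satisfiability of the resulting quantifier-free IPC formula is standard (it reduces to Presburger arithmetic, or more directly can be decided by the IPC decision procedure). Then \thmref{correctness} gives that $\psi$ is (boundedly) realizable iff $\Win(\ANDOR)$ is satisfiable, and \defref{strategy} yields the concrete strategy $g$, whose required satisfiability checks in~\eqref{eq:strategy} are again quantifier-free IPC satisfiability queries and hence effective.

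The main obstacle I anticipate is the bookkeeping needed to show that the set of atoms — and therefore the set of formulas up to equivalence — genuinely stays finite throughout the $\PreC$ iteration. One has to check that the renamings $[\pre V/V]$ and $[V/\pre V]$ do not create atoms relating a variable to its own lookback copy with an unbounded offset, and that eliminating the quantified blocks $\forall X$ and $\exists Y$ (which mix current and lookback variables) does not inflate the moduli beyond $K$ or the constants beyond the fixed bound. This is exactly the point where the specific shape of the \cite{Demri06} elimination procedure must be invoked carefully; I would isolate it as a small lemma (``IPC formulas with moduli dividing $K$ and constants bounded by $B$ are closed under $\forall$, $\exists$, boolean connectives, and the two renamings, up to $\TT$-equivalence'') and then the theorem follows almost verbatim from the MC argument. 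Modulo that lemma, the proof is routine, which is why I would relegate the details to the appendix as the paper does for \thmref{mc}.
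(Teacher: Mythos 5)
Your proposal follows essentially the same route as the paper's proof: both show by induction on $i$ that every $\Win_i(s)$ stays within the finite-up-to-equivalence class of quantifier-free IPC formulas over $V$ with constants (and moduli) determined by $\foa(\psi)$, invoke the closure of this class under quantifier elimination from~\cite[Thm.~2]{Demri06}, deduce that the fixpoint $\Win(\ANDOR)$ exists, and conclude via \thmref{correctness} as in the MC case. Your extra bookkeeping about l.c.m.'s of moduli and bounds on constants is a more explicit rendering of what the paper delegates wholesale to the cited theorem, but it is not a different argument.
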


\paragraph{Bounded lookback} is a theory-independent restriction of the interaction between variables via lookback. Intuitively, a property $\psi \in \LL$ has $K$-bounded lookback, for $K\geq 0$, if there is no path in the AND-OR-graph $\ANDOR$ whose edge labels create a dependency chain between variables that spans more than $K$ instants; where, however, equality atoms are not counted for dependencies as they can be collapsed.
Bounded lookback has been considered for model checking and monitoring~\cite{FMPW23,GMW24}. In fact the definition below differs slightly from the literature, but we keep the name as the concept is the same.

Given a property $\psi \in \LL$ and $\ANDOR[\psi]$ with AND-nodes $s$, $s'$,
let $\rho=\langle w_0 \wedge \varsigma_0, \dots, w_{n-1} \wedge\varsigma_{n-1}\rangle$ be a sequence of formulas as in Eq.~\eqref{eq:path} such that $s \gotos{\rho} s'$.
For each $0\,{\leq}\,i\,{<}\,n$, let $V_i$ be a fresh set of variables of the same size and sorts as $V$, and $\mathcal V = \bigcup_{i=0}^{n-1} V_i$.
Consider the formulas
$(w_0 \wedge \varsigma_0)[V_0]$, i.e., the formula obtained from $w_0 \wedge \varsigma_0$ by systematically replacing $V$ with $V_0$; and $(w_i \wedge \varsigma_i)[V_{i-1}, V_i]$, i.e., the formula obtained from $w_i \wedge \varsigma_i$ by systematically replacing $\pre V$ with $V_{i-1}$ and $V$ with $V_i$, for $0<i<n$.
The \emph{computation graph} $G_\rho$ of $\rho$ is the undirected graph with nodes $\mathcal V$ and an edge from $u\in \mathcal V$ to $v\in \mathcal V$ iff the two variables occur in a common literal of
$(w_0 \wedge \varsigma_0)[V_0] \wedge \bigwedge_{i=1}^n(w_i \wedge \varsigma_i)[V_{i-1}, V_i]$.
% The subgraph of $G_{w}$ of all edges corresponding to equality literals $x=y$ for $x, y \in \mc V$ is denoted $E_{w}$.
Finally, $[G_{\rho}]$ denotes the graph obtained from $G_{\rho}$ by collapsing
all edges corresponding to equality literals $u=v$ for $u, v \in \mc V$.

\begin{definition}
$\ANDOR[\psi]$ has \emph{$K$-bounded lookback} if for all AND-nodes $s$, $s'$ and
all sequences of formulas $\rho$ such that $s \gotos{\rho} s'$ in $\ANDOR$,
all acyclic paths in $[G_{\rho}]$ have length at most $K$.
%
% $\ANDOR[\psi]$ has \emph{bounded lookback} if it has $K$-bounded lookback for some $K$.
\end{definition}

E.g., $(x=y) \U (x >\pre x \wedge \X(x{>}\pre x))$ has 3-bounded lookback, but $\psi$ from \exaref{intro} or $(x>\pre x) \U (x=y)$ do not have $K$-bounded lookback for any $K$.
We show next that the synthesis problem for bounded lookback properties can be solved if FO formulas with a sufficient number of quantifier alternations can be decided in $\TT$.
The proof basically shows that every formula $\Win_i(s)$ comes from a set of formulas that is finite because it has bounded quantifier depth and a finite vocabulary; for details cf. the appendix.

\begin{restatable}{thm}{thmBL}
\label{thm:bl}
If the theory $\TT$ has a decidable $(\forall^*\exists^*)^K$ fragment,
the synthesis problem is solvable for $K$-bounded-lookback properties.
\end{restatable}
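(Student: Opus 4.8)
The plan is to show that, for a well-formed property $\psi$ with $K$-bounded lookback, the fixpoint $\Win(\ANDOR)$ is defined and effectively computable, and that testing its satisfiability decides the synthesis problem (via \thmref{correctness}) by reduction to satisfiability of $(\forall^*\exists^*)^K$ formulas. Everything hinges on a normal-form lemma: for every $i\geq 0$ and every AND-node $s$, the formula $\Win_i(s)$ is $\TT$-equivalent to a prenex formula with at most $K$ alternating blocks of $\forall X$/$\exists Y$ quantifiers, whose matrix is a Boolean combination of the (finitely many) atoms in $\foa(\psi)$ taken over at most $K\cdot|V|$ variables. Writing $\mathcal F$ for the set of such formulas modulo $\equivT$, this set is finite, and that is what forces the fixpoint to exist.

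First I would prove this normal-form lemma, which I expect to be the main obstacle. Unfolding $\Win_{i+1}=\Win_i\vee\PreC_{\Win_i}$ together with the definition of $\PreC$ shows that $\Win_i(s)$ is a Boolean combination, over the paths $s\gotos{\rho}s'$ of $\ANDOR$ of length at most $i$ ending in a final node, of formulas that contribute one $\forall X$/$\exists Y$ block per edge of $\rho$ and whose matrix conjoins the edge labels $w_j\wedge\varsigma_j$, each relating only the variables of two consecutive instants (current versus lookback). After the bound-variable renaming $[\pre V/V]$, this is precisely the computation-graph picture from the definition of $K$-bounded lookback: the instant-indexed copies of the variables and their co-occurrences in atoms form exactly the graph $G_\rho$, and $K$-bounded lookback means the collapsed graph $[G_\rho]$ has only acyclic paths of length at most $K$, bounding how far a dependency can propagate across instants. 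I would exploit this in two ways — reusing the same $O(K|V|)$ variable names cyclically across the instants of $\rho$, and merging quantifier blocks so that only $K$ alternations remain (a block at ``distance'' at least $K$ from the free, instant-$0$ variables is independent of them and of the intervening quantifiers, so it can be migrated outward and amalgamated with an earlier block of equal polarity). This adapts the quantifier-depth bookkeeping of \cite[Thm.~33]{FMPW23} and \cite{GMW24}; making the cyclic renaming and the block amalgamation interact correctly with the Boolean combination over all paths $\rho$ is the delicate point.

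Next, granting the lemma, $\mathcal F$ is finite because there are only finitely many Boolean combinations of finitely many atoms over finitely many variables, hence finitely many formulas of a fixed quantifier-prefix shape up to $\equivT$. Since $\Win_i(s)\modelsT\Win_{i+1}(s)$ for every $i$ (immediate from $\Win_{i+1}=\Win_i\vee\PreC_{\Win_i}$), the sequence $(\Win_i(s))_i$ is non-decreasing in the entailment order, and a non-decreasing sequence in the finite set $\mathcal F$ stabilizes; as $S_\land$ is finite, some single $m$ satisfies $\Win_{m+1}(s)\equivT\Win_m(s)$ for all $s$, so $\Win(\ANDOR)$ is defined and equals $\Win_m(s_0)$. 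Moreover $\PreC$ is a syntactic transformation and the normal form can be maintained constructively, a bound on $m$ can be read off from $|S_\land|$ and $|\mathcal F|$, and every $\TT$-validity/satisfiability question needed — for simplifying into normal form, for (optionally) detecting stabilization, and for the final test on $\Win_m(s_0)$ — is about a $(\forall^*\exists^*)^K$ formula and hence decidable by hypothesis.

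Finally I would invoke \thmref{correctness}: if $\Win_m(s_0)$ is satisfiable then $\psi$ is boundedly realizable by the strategy $g$ of \defref{strategy} (instantiated with $K:=m$), hence realizable; if $\Win_m(s_0)$ is unsatisfiable then, the fixpoint being defined, $\psi$ is not boundedly realizable, and since the existence of the fixpoint ensures that realizability already implies bounded realizability (unlike in \exaref{unbounded}, where no fixpoint exists), $\psi$ is not realizable. Thus the procedure decides realizability and $g$ realizes $\psi$ whenever $\psi$ is realizable, i.e.\ the synthesis problem is solvable for $K$-bounded-lookback properties.
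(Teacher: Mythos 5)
Your proposal follows essentially the same route as the paper: show by induction that every $\Win_i(s)$ lies (up to $\TT$-equivalence) in a finite set of formulas with bounded quantifier alternation over a finite vocabulary of instant-indexed variable copies, using $K$-bounded lookback (via the collapsed computation graph) to prune dependencies beyond $K$ instants, conclude that the fixpoint exists, decide everything in the $(\forall^*\exists^*)^K$ fragment, and invoke \thmref{correctness}. The only cosmetic difference is bookkeeping (the paper bounds quantifier \emph{depth} by $2K$ over a fixed set $\mathcal V$ of copies rather than amalgamating blocks cyclically), so the argument matches.
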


More concretely, \thmref{bl} applies to all $K$ if $\TT$ has quantifier elimination, as is the case for \LIA or \LRA, or if $\TT$ is a decidable first order class such as FO$^2$~\cite{Mortimer75} or the stratified fragment~\cite{ABN:JPL:1998}.
Rich models to describe business processes such as DPNs 
and data-aware processes modulo theories~\cite{GMW24}
can be encoded in \LTLfMT properties over linear arithmetic, and possibly uninterpreted functions.
In fact many practical business processes were shown to enjoy bounded lookback, or the weaker feedback freedom  property~\cite{GMW24,FMW22caise,DDV12}.

\section{Conclusion}
\label{sec:conclusion}

This paper presents the first reactive synthesis procedure for \LTLfMT properties with lookback.
The approach is proven sound, and complete wrt. \emph{bounded realizability}.
It is also a decision procedure for several fragments of \LTLfMT, reproving known decidability results and identifying new decidable fragments.
% including properties where variable interaction via lookback is bounded, with a reasonable restriction on decidability of the underlying logic; and properties over the theory of linear arithmetic where atoms are variable-to-variable/constant comparisons. These outcomes cover at once results from several earlier works~\cite{RodriguezS23,LFM20,BhaskarP24}.

A variety of directions for future work arise. First of all, we plan an implementation with a focus on linear arithmetic, using SMT solvers with QE support as backends~\cite{Z3,cvc5}. Though the treatment of quantifiers is notoriously difficult, recent verification tools witness that QE is not a showstopper~\cite{HiplerKLS24,GMW24,FMPW23}. Second, efficiency could be gained by interleaving the computation of the AND-OR-graph $\ANDOR$ and the winning condition $\Win(\ANDOR)$, exploiting on-the-fly techniques from~\cite{GiacomoFLVX022,XiaoL0SPV21}. Third, it would be interesting to study whether the restriction on bounded realizability in \thmref{correctness} can be lifted.

\bibliography{references}

\begin{thebibliography}{38}
\providecommand{\natexlab}[1]{#1}
\providecommand{\url}[1]{\texttt{#1}}
\expandafter\ifx\csname urlstyle\endcsname\relax
  \providecommand{\doi}[1]{doi: #1}\else
  \providecommand{\doi}{doi: \begingroup \urlstyle{rm}\Url}\fi

\bibitem[Andr\'eka et~al.(1998)Andr\'eka, {van Benthem}, and
  Nemeti]{ABN:JPL:1998}
H.~Andr\'eka, J.~{van Benthem}, and I.~Nemeti.
\newblock Modal logics and bounded fragments of predicate logic.
\newblock \emph{J. Phil. Logic}, 27\penalty0 (3):\penalty0 217--274, 1998.
\newblock \doi{10.1023/A:1004275029985}.

\bibitem[Bansal et~al.(2020)Bansal, Li, Tabajara, and Vardi]{BansalLTV20}
S.~Bansal, Y.~Li, L.~M. Tabajara, and M.~Y. Vardi.
\newblock Hybrid compositional reasoning for reactive synthesis from
  finite-horizon specifications.
\newblock In \emph{Proc. 34th {AAAI}}, pages 9766--9774, 2020.
\newblock \doi{10.1609/AAAI.V34I06.6528}.

\bibitem[Barbosa et~al.(2022)Barbosa, Barrett, Brain, Kremer, Lachnitt, Mann,
  Mohamed, Mohamed, Niemetz, N{\"{o}}tzli, Ozdemir, Preiner, Reynolds, Sheng,
  Tinelli, and Zohar]{cvc5}
H.~Barbosa, C.~W. Barrett, M.~Brain, G.~Kremer, H.~Lachnitt, M.~Mann,
  A.~Mohamed, M.~Mohamed, A.~Niemetz, A.~N{\"{o}}tzli, A.~Ozdemir, M.~Preiner,
  A.~Reynolds, Y.~Sheng, C.~Tinelli, and Y.~Zohar.
\newblock cvc5: {A} versatile and industrial-strength {SMT} solver.
\newblock In \emph{Proc. 28th TACAS}, volume 13243 of \emph{LNCS}, pages
  415--442, 2022.
\newblock \doi{10.1007/978-3-030-99524-9\_24}.

\bibitem[Barrett et~al.(2021)Barrett, Sebastiani, Seshia, and
  Tinelli]{BarrettSST21}
C.~W. Barrett, R.~Sebastiani, S.~A. Seshia, and C.~Tinelli.
\newblock Satisfiability modulo theories.
\newblock In \emph{Handbook of Satisfiability - Second Edition}, volume 336 of
  \emph{Frontiers in Artificial Intelligence and Applications}, pages
  1267--1329. {IOS} Press, 2021.

\bibitem[Bernays and Sch{\"o}nfinkel(1928)]{BS:MA:1928}
P.~Bernays and M.~Sch{\"o}nfinkel.
\newblock {Zum Entscheidungsproblem der mathematischen Logik}.
\newblock \emph{Mathematische Annalen}, 99:\penalty0 342--372, 1928.
\newblock \doi{10.1007/BF01459101}.

\bibitem[Bhaskar and Praveen(2024)]{BhaskarP24}
A.~Bhaskar and M.~Praveen.
\newblock Realizability problem for constraint {LTL}.
\newblock \emph{Inf. Comput.}, 296:\penalty0 105126, 2024.
\newblock \doi{10.1016/J.IC.2023.105126}.
\newblock URL \url{https://doi.org/10.1016/j.ic.2023.105126}.

\bibitem[Convent et~al.(2018)Convent, Hungerecker, Leucker, Scheffel, Schmitz,
  and Thoma]{tessla}
L.~Convent, S.~Hungerecker, M.~Leucker, T.~Scheffel, M.~Schmitz, and D.~Thoma.
\newblock {TeSSLa}: Temporal stream-based specification language.
\newblock In \emph{Proc. 21st SBMF}, pages 144--162, 2018.
\newblock \doi{10.1007/978-3-030-03044-5_10}.

\bibitem[Damaggio et~al.(2012)Damaggio, Deutsch, and Vianu]{DDV12}
E.~Damaggio, A.~Deutsch, and V.~Vianu.
\newblock Artifact systems with data dependencies and arithmetic.
\newblock \emph{{ACM} Trans. Database Syst.}, 37\penalty0 (3):\penalty0
  22:1--22:36, 2012.
\newblock \doi{10.1145/2338626.2338628}.

\bibitem[D'Angelo et~al.(2005)D'Angelo, Sankaranarayanan, S{\'{a}}nchez,
  Robinson, Finkbeiner, Sipma, Mehrotra, and Manna]{DAngeloSSRFSMM05}
B.~D'Angelo, S.~Sankaranarayanan, C.~S{\'{a}}nchez, W.~Robinson, B.~Finkbeiner,
  H.~B. Sipma, S.~Mehrotra, and Z.~Manna.
\newblock {LOLA:} runtime monitoring of synchronous systems.
\newblock In \emph{Proc.\ 12th TIME}, pages 166--174, 2005.
\newblock \doi{10.1109/TIME.2005.26}.

\bibitem[Darwiche(2011)]{Darwiche11}
A.~Darwiche.
\newblock {SDD:} {A} new canonical representation of propositional knowledge
  bases.
\newblock In \emph{Proc.\ 22nd IJCAI}, pages 819--826, 2011.
\newblock \doi{10.5591/978-1-57735-516-8/IJCAI11-143}.

\bibitem[De~Giacomo and Vardi(2013)]{DeGV13}
G.~De~Giacomo and M.~Y. Vardi.
\newblock Linear temporal logic and linear dynamic logic on finite traces.
\newblock In \emph{Proc.\ 23rd IJCAI}, pages 854--860, 2013.

\bibitem[{de Leoni} and Mannhardt(2018)]{LM18}
M.~{de Leoni} and F.~Mannhardt.
\newblock Decision discovery in business processes.
\newblock In \emph{Encyclopedia of Big Data Technologies}, pages 1--12.
  Springer, 2018.
\newblock \doi{10.1007/978-3-319-63962-8\_96-1}.

\bibitem[de~Leoni et~al.(2020)de~Leoni, Felli, and Montali]{LFM20}
M.~de~Leoni, P.~Felli, and M.~Montali.
\newblock Strategy synthesis for data-aware dynamic systems with multiple
  actors.
\newblock In \emph{Proc.\ 17th KR}, pages 315--325, 2020.
\newblock \doi{10.24963/kr.2020/32}.

\bibitem[{de Moura} and Bj{\o}rner(2008)]{Z3}
L.~{de Moura} and N.~Bj{\o}rner.
\newblock {Z3:} an efficient {SMT} solver.
\newblock In \emph{Proc. 14th TACAS}, volume 4963 of \emph{LNCS}, pages
  337--340, 2008.
\newblock \doi{10.1007/978-3-540-78800-3\_24}.

\bibitem[Demri(2006)]{Demri06}
S.~Demri.
\newblock {LTL} over integer periodicity constraints.
\newblock \emph{Theor. Comput. Sci.}, 360\penalty0 (1-3):\penalty0 96--123,
  2006.
\newblock \doi{10.1016/j.tcs.2006.02.019}.

\bibitem[Demri and D'Souza(2007)]{DD07}
S.~Demri and D.~D'Souza.
\newblock An automata-theoretic approach to constraint {LTL}.
\newblock \emph{Inform. Comput.}, 205\penalty0 (3):\penalty0 380--415, 2007.
\newblock \doi{10.1016/j.ic.2006.09.006}.

\bibitem[Felli et~al.(2022)Felli, Montali, and Winkler]{FMW22caise}
P.~Felli, M.~Montali, and S.~Winkler.
\newblock Soundness of data-aware processes with arithmetic conditions.
\newblock In \emph{Proc. 34th CAiSE}, volume 13295 of \emph{LNCS}, pages
  389--406, 2022.
\newblock \doi{10.1007/978-3-031-07472-1\_23}.

\bibitem[Felli et~al.(2023)Felli, Montali, Patrizi, and Winkler]{FMPW23}
P.~Felli, M.~Montali, F.~Patrizi, and S.~Winkler.
\newblock Monitoring arithmetic temporal properties on finite traces.
\newblock In \emph{Proc.\ 37th AAAI}, pages 6346--6354, 2023.

\bibitem[Geatti et~al.(2022)Geatti, Gianola, and Gigante]{GGG22}
L.~Geatti, A.~Gianola, and N.~Gigante.
\newblock Linear temporal logic modulo theories over finite traces.
\newblock In \emph{Proc.\ 31st IJCAI}, pages 2641--2647, 2022.
\newblock \doi{10.24963/ijcai.2022/366}.

\bibitem[Geist et~al.(2014)Geist, Rozier, and Schumann]{GeistRS14}
J.~Geist, K.~Y. Rozier, and J.~Schumann.
\newblock Runtime observer pairs and bayesian network reasoners on-board
  {FPGAs}: Flight-certifiable system health management for embedded systems.
\newblock In \emph{Proc.\ 5th Runtime Verification}, volume 8734 of
  \emph{LNCS}, pages 215--230, 2014.
\newblock \doi{10.1007/978-3-319-11164-3\_18}.

\bibitem[Giacomo and Favorito(2021)]{GiacomoF21}
G.~D. Giacomo and M.~Favorito.
\newblock Compositional approach to translate {LTLf/LDLf} into deterministic
  finite automata.
\newblock In \emph{Proc. 31st {ICAPS}}, pages 122--130. {AAAI} Press, 2021.
\newblock \doi{10.1609/icaps.v31i1.15954}.

\bibitem[Giacomo and Vardi(2015)]{deGiacomoV15}
G.~D. Giacomo and M.~Y. Vardi.
\newblock Synthesis for {LTL} and {LDL} on finite traces.
\newblock In \emph{Proc.\ 24th IJCAI}, pages 1558--1564, 2015.

\bibitem[Giacomo et~al.(2020)Giacomo, Stasio, Vardi, and Zhu]{GiacomoSVZ20}
G.~D. Giacomo, A.~D. Stasio, M.~Y. Vardi, and S.~Zhu.
\newblock Two-stage technique for ltlf synthesis under {LTL} assumptions.
\newblock In \emph{Proc. 17th KR}, pages 304--314, 2020.
\newblock \doi{10.24963/KR.2020/31}.

\bibitem[Giacomo et~al.(2022)Giacomo, Favorito, Li, Vardi, Xiao, and
  Zhu]{GiacomoFLVX022}
G.~D. Giacomo, M.~Favorito, J.~Li, M.~Y. Vardi, S.~Xiao, and S.~Zhu.
\newblock {LTLf} synthesis as {AND-OR} graph search: Knowledge compilation at
  work.
\newblock In \emph{Proc.\ 31st IJCAI}, pages 2591--2598, 2022.
\newblock \doi{10.24963/IJCAI.2022/359}.

\bibitem[Gianola et~al.(2024)Gianola, Montali, and Winkler]{GMW24}
A.~Gianola, M.~Montali, and S.~Winkler.
\newblock Linear-time verification of data-aware processes modulo theories via
  covers and automata.
\newblock In \emph{Proc. 38th {AAAI}}, pages 10525--10534, 2024.

\bibitem[Hipler et~al.(2024)Hipler, Kallwies, Leucker, and
  S{\'{a}}nchez]{HiplerKLS24}
R.~Hipler, H.~Kallwies, M.~Leucker, and C.~S{\'{a}}nchez.
\newblock General anticipatory runtime verification.
\newblock In \emph{Proc. 36th CAV}, volume 14682 of \emph{LNCS}, pages
  133--155, 2024.
\newblock \doi{10.1007/978-3-031-65630-9\_7}.

\bibitem[Johannsen et~al.(2024)Johannsen, Nukala, Dureja, Irfan, Shankar,
  Tinelli, Vardi, and Rozier]{moxi}
C.~Johannsen, K.~Nukala, R.~Dureja, A.~Irfan, N.~Shankar, C.~Tinelli, M.~Y.
  Vardi, and K.~Y. Rozier.
\newblock The {MoXI} model exchange tool suite.
\newblock In \emph{Proc. 36th CAV}, pages 203--218, 2024.
\newblock \doi{10.1007/978-3-031-65627-9\_10}.

\bibitem[Kov{\'{a}}cs and Voronkov(2013)]{KovacsV13}
L.~Kov{\'{a}}cs and A.~Voronkov.
\newblock First-order theorem proving and {Vampire}.
\newblock In \emph{Proc. 25th CAV}, volume 8044 of \emph{LNCS}, pages 1--35,
  2013.
\newblock \doi{10.1007/978-3-642-39799-8\_1}.

\bibitem[Kroening and Strichman(2016)]{KS16}
D.~Kroening and O.~Strichman.
\newblock \emph{Decision Procedures -- An Algorithmic Point of View, Second
  Edition}.
\newblock Springer, 2016.
\newblock \doi{10.1007/978-3-662-50497-0}.

\bibitem[Mannhardt et~al.(2016)Mannhardt, de~Leoni, Reijers, and van~der
  Aalst]{MannhardtLRA16}
F.~Mannhardt, M.~de~Leoni, H.~Reijers, and W.~van~der Aalst.
\newblock Balanced multi-perspective checking of process conformance.
\newblock \emph{Computing}, 98\penalty0 (4):\penalty0 407--437, 2016.
\newblock \doi{10.1007/s00607-015-0441-1}.

\bibitem[Mortimer(1975)]{Mortimer75}
M.~Mortimer.
\newblock On languages with two variables.
\newblock \emph{Zeitschr. f. math. Logik und Grundlagen d. Math.}, 21:\penalty0
  135--140, 1975.
\newblock \doi{10.1002/malq.19750210118}.

\bibitem[Pnueli and Rosner(1989{\natexlab{a}})]{PnueliR89}
A.~Pnueli and R.~Rosner.
\newblock On the synthesis of a reactive module.
\newblock In \emph{Proc. 16th POPL}, pages 179--190, 1989{\natexlab{a}}.
\newblock \doi{10.1145/75277.75293}.

\bibitem[Pnueli and Rosner(1989{\natexlab{b}})]{PnueliR89b}
A.~Pnueli and R.~Rosner.
\newblock On the synthesis of an asynchronous reactive module.
\newblock In \emph{Proc. 16th ICALP}, volume 372 of \emph{LNCS}, pages
  652--671, 1989{\natexlab{b}}.
\newblock \doi{10.1007/BFB0035790}.

\bibitem[Rodr{\'{\i}}guez and S{\'{a}}nchez(2023)]{RodriguezS23}
A.~Rodr{\'{\i}}guez and C.~S{\'{a}}nchez.
\newblock Boolean abstractions for realizability modulo theories.
\newblock In \emph{Proc.\ 35th CAV}, volume 13966 of \emph{LNCS}, pages
  305--328. Springer, 2023.
\newblock \doi{10.1007/978-3-031-37709-9\_15}.

\bibitem[Rodr{\'{\i}}guez and S{\'{a}}nchez(2024{\natexlab{a}})]{RodriguezS24}
A.~Rodr{\'{\i}}guez and C.~S{\'{a}}nchez.
\newblock Adaptive reactive synthesis for {LTL} and {LTLf} modulo theories.
\newblock In \emph{Proc.\ 36th AAAI}, pages 10679--10686. {AAAI} Press,
  2024{\natexlab{a}}.
\newblock \doi{10.1609/AAAI.V38I9.28939}.
\newblock URL \url{https://doi.org/10.1609/aaai.v38i9.28939}.

\bibitem[Rodr{\'{\i}}guez and
  S{\'{a}}nchez(2024{\natexlab{b}})]{RodriguezS24jlamp}
A.~Rodr{\'{\i}}guez and C.~S{\'{a}}nchez.
\newblock Realizability modulo theories.
\newblock \emph{J. Log. Algebraic Methods Program.}, 140:\penalty0 100971,
  2024{\natexlab{b}}.
\newblock \doi{10.1016/J.JLAMP.2024.100971}.

\bibitem[Rodr\'iguez et~al.(2025)Rodr\'iguez, Amir, Corsi, S\'anchez, and
  Katz]{rodriguez25shield}
A.~Rodr\'iguez, G.~Amir, D.~Corsi, C.~S\'anchez, and G.~Katz.
\newblock Shield synthesis for ltl modulo theories.
\newblock In \emph{Proc. 39th AAAI}, pages 5037--5044, 2025.

\bibitem[Xiao et~al.(2021)Xiao, Li, Zhu, Shi, Pu, and Vardi]{XiaoL0SPV21}
S.~Xiao, J.~Li, S.~Zhu, Y.~Shi, G.~Pu, and M.~Y. Vardi.
\newblock On-the-fly synthesis for {LTL} over finite traces.
\newblock In \emph{Proc.\ 35th AAAI}, pages 6530--6537, 2021.

\end{thebibliography}
\newpage
\appendix
\section{Proofs}

\begin{lemma}
\label{lem:boring}
Let $\psi$ be well-formed and $\sigma = \seq_\psi(\tau)$. Then $\sigma \models \psi$ in a propositional sense iff $\tau \models \psi$.
\end{lemma}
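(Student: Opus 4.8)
The plan is to prove a statement slightly stronger than the lemma, by structural induction on properties. Writing $\tau = (M,\langle\alpha_0,\dots,\alpha_{n-1}\rangle)$ and $\sigma = \seq_\psi(\tau) = \langle A_0,\dots,A_{n-1}\rangle$, I would show: for every subproperty $\chi$ of $\psi$ and every $i$ with $0 \le i < n$ such that either $i \ge 1$ or $\chi$ is well-formed, $\sigma, i \models \chi$ (in the propositional sense) iff $\tau, i \models \chi$. The lemma follows by taking $i = 0$ and $\chi = \psi$, since $\tau\models\psi$ and $\sigma\models\psi$ abbreviate satisfaction at instant $0$ and $\psi$ is well-formed by hypothesis.

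I would begin with the atomic cases, which are the only place the weak-lookback clause of \defref{semantics} could cause trouble: a lookback atom $a$ is vacuously satisfied by $\tau$ at instant $0$, but it is by construction never a member of $A_0$. Well-formedness is exactly the condition that excludes this: if $\chi = a$ is well-formed, then the sole atom of $\ta(\xnf(a)) = \{a\}$ is not rooted by $\X$ or $\Xw$, hence $a$ contains no lookback variable, and then $\tau,0\models a$ iff $M,\alpha_0\models a$ iff $a\in A_0$ iff $\sigma,0\models a$. For $i\ge 1$ every atom is well-defined, and $\tau,i\models a$ iff $M,\combine{\alpha_{i-1}}{\alpha_i}\models a$ iff $a\in A_i$, directly from the definitions of $\eval{\cdot}_\tau^i$ and of $\seq_\psi(\tau)$. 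The cases $\neg a$ (dual), $\top$, $\bot$ are immediate.

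For the inductive step I would exploit that the propositional and the first-order semantics of $\wedge$, $\vee$, $\X$, $\Xw$, $\U$, $\R$ are given by literally the same clauses (Boolean combination, respectively quantification over positions $j$, $k$ with $i\le j,k<n$), so once the truth values of the operands agree at every relevant instant, so do those of $\chi$. Two points need care. First, the induction hypothesis is always invoked with no well-formedness requirement whenever the instant is $\ge 1$; it is needed only at instant $0$, i.e. for the operands of a well-formed $\wedge$-, $\vee$-, $\U$- or $\R$-property, whereas the operand of $\X\chi_1$ or $\Xw\chi_1$ is reached only at instant $i+1\ge 1$ (or the connective is trivially true/false when $i = n-1$). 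Second, well-formedness indeed propagates to these operands: $\xnf$ commutes with $\wedge,\vee$, and $\xnf(\chi_1\U\chi_2) = \xnf(\chi_2)\vee(\xnf(\chi_1)\wedge\X(\chi_1\U\chi_2))$ as well as $\xnf(\chi_1\R\chi_2) = (\xnf(\chi_2)\vee\last)\wedge(\xnf(\chi_1)\vee\Xw(\chi_1\R\chi_2))$, so the atoms of $\ta(\xnf(\chi_1))$ and $\ta(\xnf(\chi_2))$ all occur among those of $\ta(\xnf(\chi))$ — and the extra atom $\last$ introduced by the $\R$-unfolding is variable-free, hence vacuously lookback-free — so well-formedness of $\chi$ entails that of $\chi_1$ and $\chi_2$.

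The main obstacle is really the bookkeeping: making precise that ``well-formed'' is exactly the invariant that survives the descent of the induction down to the atoms that may be evaluated at instant $0$, thereby disabling the weak-lookback clause. Everything else is a routine unfolding of \defref{semantics} and of the definition of $\seq_\psi(\tau)$, using additionally that every atom of any subproperty of $\psi$, and of its $\xnf$ apart from $\last$, lies in $\foa(\psi)$, so the reference to $\foa(\psi)$ in the construction of $\sigma$ is harmless.
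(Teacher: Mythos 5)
Your proposal is correct and follows essentially the same route as the paper's proof: the same strengthened statement (agreement of $\tau,i\models\chi$ and $\sigma,i\models\chi$ for every subproperty $\chi$ and every instant $i$, under the guard ``$i\geq 1$ or $\chi$ well-formed''), proved by structural induction with the same case analysis. Your treatment is in fact somewhat more explicit than the paper's on why well-formedness propagates through $\wedge$, $\vee$, $\U$, $\R$ via the $\xnf$ unfolding and why the introduced $\last$ atom is harmless, points the paper only asserts in passing.
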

\begin{proof}
Let $\tau = (M,\tup{\alpha_0, \dots, \alpha_{n-1}})$ and $\sigma=\tup{A_0, \dots, A_{n-1}}$.
We first show that for all $i$, $0 \leq i < n$, and $\chi$ a subproperty of $\psi$ such that either $i>0$ or $\chi$ is well-formed, $\tau,i \models \chi$ iff $\sigma,i \models \chi$, by induction on $\chi$.
\begin{compactitem}
\item If $\chi$ is an atom, $\tau,i \models \chi$ implies $M,\alpha_0 \models \chi$ iff $i=0$, or $M,\combine{\alpha_{i-1}}{\alpha_i} \models \chi$ otherwise. By definition of $\seq_\psi$, this is the case iff $\psi \in A_i$ (note that since $\chi$ is well-formed if $i=0$, $\chi$ does not contain $\pre V$ in this case).
\item If $\chi = \neg a$ for an atom $a$, or $\chi_1 \wedge \chi_2$ or $\chi_1 \vee \chi_2$, we conclude by the induction hypothesis. Note that in all cases well-formedness of $\chi$ implies well-formedness of $a$ resp. $\chi_1$ and $\chi_2$.
\item If $\chi =\X \chi'$ then $\tau,i+1 \models \chi'$ iff $\sigma,i+1 \models \chi'$ by the induction hypothesis.
We have $\tau,i \models \chi$ iff $i < n-1$ and $\tau,i+1 \models \chi'$, which holds  iff $i < n-1$ and $\sigma,i+1 \models \chi'$, which holds iff $\sigma,i \models \chi$.
\item The case of $\chi=\Xw \chi'$ is similar.
\item 
The cases of $\chi_1 \U \chi_2$ and $\chi_1 \R \chi_2$ combine reasoning from the cases above.
\end{compactitem}
We thus conclude that $\tau,0 \models \psi$ iff $\sigma,0 \models \psi$.
\end{proof}

Let $\sim$ denote propositional equivalence of properties in $\LL$, i.e., the equivalence relation obtained by applying boolean equivalences.
Progression of $\sim$-equivalent properties yields $\sim$-equivalent properties~\cite[Lem. 2]{GiacomoFLVX022}, below we will use this fact ($\star$).

\thmbasic*
\begin{proof}
By \lemref{progression}, $\psi^+(\sigma)=\top$ iff $\sigma \models \psi$ in a propositional sense, and by \lemref{boring}, $\sigma \models \psi$ in a propositional sense iff $\tau \models \psi$.
\end{proof}

\begin{lemma}
\label{lem:xnf:decomposition}
Let $\psi$ be in XNF and $\decomp(\psi, C)=\{(\prim_1,\sub_1), \dots,(\prim_k,\sub_k)\}$ for some set of atoms $C$ such that $\foa(\psi) \subseteq C$ .
If $A$ is a set of atoms such that $A\models \prim_i$ for some $1\leq i \leq k$
then $\psi^+(A)=\remX(\sub_i)$.
\end{lemma}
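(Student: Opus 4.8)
The plan is to prove \lemref{xnf:decomposition} by a straightforward structural induction on the property $\psi$, which is assumed to be in XNF. Since $\psi$ is in XNF, the set $\ta(\psi)$ of top-level non-propositional subproperties consists only of literals (atoms or negated atoms) and properties rooted by $\Xs$ or $\Xw$; at the propositional level, $\psi$ is a boolean combination (using $\wedge$ and $\vee$) of these top-level items. The key observation is that $\decomp$ is defined modulo propositional equivalence: from $\psi \equiv \bigvee_{i=1}^k(\prim_i \wedge \sub_i)$ together with the disjointness and covering of the $\prim_i$, the hypothesis $A \models \prim_i$ pins down $\sub_i$ uniquely, and $\psi^+(A)$ must agree (propositionally, hence up to $\sim$) with $(\prim_i \wedge \sub_i)^+(A)$. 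Because $A \models \prim_i$ and $\prim_i$ is temporal-operator-free, $\prim_i^+(A) \equiv \top$, so $\psi^+(A) \sim \sub_i^+(A)$. Thus the problem reduces to showing $\sub_i^+(A) \sim \remX(\sub_i)$ for the ``subsequent'' part $\sub_i$, which by condition (2) of $\decomp$ has no atom in $C \supseteq \foa(\psi)$, hence no atom at all at top level: every top-level item of $\sub_i$ is rooted by $\Xs$ or $\Xw$.

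First I would set up the reduction above carefully, invoking \lemref{progression} (or rather property ($\star$), that progression respects $\sim$) and the explicit properties (1)--(4) of $\decomp$ to get $\psi^+(A) \sim \sub_i^+(A)$. Then the core of the argument is a small lemma: for any property $\chi$ in XNF all of whose top-level items are rooted by $\Xs$ or $\Xw$ and any atom set $A$, we have $\chi^+(A) \sim \remX(\chi)$. This goes by induction on the boolean structure of $\chi$: in the base cases $\chi = \Xs\chi'$ gives $\chi^+(A) = \chi' \wedge \neg\last = \remX(\Xs\chi')$ by \defref{progression} and the definition of $\remX$, and symmetrically $\chi = \Xw\chi'$ gives $\chi^+(A) = \chi' \vee \last = \remX(\Xw\chi')$; I should also cover $\chi \in \{\top,\bot\}$ trivially, and note $\last$/$\neg\last$ only arise from inner unfolding, not at this level. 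The inductive cases $\chi = \chi_1 \wedge \chi_2$ and $\chi = \chi_1 \vee \chi_2$ follow immediately since both $(\cdot)^+(A)$ and $\remX$ distribute over $\wedge$ and $\vee$, and the induction hypothesis applies to each conjunct/disjunct (which is again in XNF with only $\Xs$/$\Xw$-rooted top-level items).

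I expect the main obstacle to be bookkeeping rather than mathematical depth: one has to be scrupulous about the distinction between propositional equivalence $\sim$ and $\TT$-equivalence, and about the fact that $\decomp$ only guarantees its defining properties at the propositional level (as the paper emphasizes). In particular the claimed equation ``$\psi^+(A) = \remX(\sub_i)$'' should be read up to $\sim$, and I would state it that way (or note the abuse), since $\psi^+$ applied to a disjunctive decomposition yields a formula only propositionally equal to, not syntactically identical to, $\remX(\sub_i)$. A second minor point to get right is that $\sub_i$ genuinely has no top-level atoms: this uses $\ta(\sub_i) \cap C = \emptyset$ from condition (2) of $\decomp$ together with $\foa(\psi) \subseteq C$ and the fact that $\ta(\sub_i) \subseteq \ta(\psi) \subseteq \foa(\psi) \cup \{\text{properties rooted by }\Xs,\Xw\}$ because $\psi$ is in XNF; hence every element of $\ta(\sub_i)$ that is an atom would lie in $C$, contradiction, so $\ta(\sub_i)$ contains only $\Xs$/$\Xw$-rooted properties, as needed to apply the core lemma.
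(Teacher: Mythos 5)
Your proof is correct, but it takes a genuinely different route from the paper's. The paper argues by structural induction on $\psi$: the base cases match progression against $\remX$ directly, and the $\wedge$/$\vee$ cases identify the compressed decomposition of $\psi_1\wedge\psi_2$ (resp.\ $\psi_1\vee\psi_2$) as the pairwise product of the decompositions of $\psi_1$ and $\psi_2$, invoking the uniqueness theorems for compressed SDD decompositions from Darwiche's work. You instead argue axiomatically from the interface of $\decomp$: property (1) gives $\psi\sim\bigvee_j(\prim_j\wedge\sub_j)$, congruence of progression under $\sim$ (the paper's fact $(\star)$) together with disjointness and covering kills every disjunct except the $i$-th (a temporal-operator-free formula over $\foa(\psi)$ progresses to $\top$ or $\bot$ according to whether $A$ satisfies it), and a separate small induction shows $\sub_i^+(A)=\remX(\sub_i)$ once one knows that $\ta(\sub_i)$ contains only items rooted by $\X$ or $\Xw$ --- which you correctly derive from $\ta(\sub_i)\cap C=\emptyset$ together with $\foa(\psi)\subseteq C$ and $\psi$ being in XNF. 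What your version buys is independence from the implementation of $\decomp$: it works for any procedure satisfying conditions (1)--(5), whereas the paper's induction is tied to the canonical compressed decomposition being the one actually computed. What it gives up is nothing essential: your caveat that the stated equality must be read up to $\sim$ is exactly the precision the paper itself settles for (its own $\wedge$/$\vee$ cases conclude with $\equiv$ via $(\star)$), and the downstream use in \lemref{andor} only needs equivalence, since progression respects $\sim$. The only loose ends --- the treatment of $\last$ and of negated atoms as top-level literals of an XNF property --- are shared with the paper's own proof and are not specific to your approach.
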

\begin{proof}
By induction on $\psi$.
\begin{compactitem}
\item
If $\psi=\top$ then the only valid decomposition is $\decomp(\psi, C)=\{(\top,\top)\}$. We have $\psi^+(A)=\top = \remX(\top)$.
\item
If $\psi=\bot$ then the only valid decomposition is $\decomp(\psi, C)=\{(\top,\bot)\}$. We have $\psi^+(A)=\bot = \remX(\bot)$.
\item
Suppose $\psi$ is an atom. Then the only valid decomposition is $\decomp(\psi, C)=\{(\psi,\top), (\neg\psi, \bot)\}$.
If $\psi \in A$ then $\psi^+(A)=\top = \remX(\top)$.
If $\psi \not\in A$ then $\psi^+(A)=\bot = \remX(\bot)$.
\item
If $\psi = \X \psi'$ we must have $\decomp(\psi, C)=\{(\top,\psi)\}$.
We have $\psi^+(A)=\psi' \wedge \neg \last = \remX(\psi)$.
\item
If $\psi = \Xw \psi'$ we must have $\decomp(\psi, C)=\{(\top,\psi)\}$.
We have $\psi^+(A)=\psi' \vee \last = \remX(\psi)$.
\item
Let $\psi = \psi_1\wedge \psi_2$. Both $\psi_1$, $\psi_2$ are in XNF.
Let $\decomp(\psi_1, \foa(\psi_1))=\{(p_1,s_1), \dots,(p_k,s_k)\}$, and 
$\decomp(\psi_2, \foa(\psi_2))=\{(q_1,r_1), \dots,(q_m,r_m)\}$.
Since decompositions are covering and disjoint, there must be exactly one $I$ such that $A\models p_I$ and exactly one $J$ such that $A\models q_J$.
By the induction hypothesis, $\psi_1^+(A)=\remX(s_I)$ and $\psi_2^+(A)=\remX(r_J)$.
By \cite[Thms. 2 and 3]{Darwiche11}, the only compressed decomposition for $\psi$ is 
\[\{(p_l\wedge q_j, s_l \wedge r_j) \mid 1\leq l \leq k\text{, }1\leq j \leq m\text{, and }p_i\wedge q_j \not\equiv\bot\}\]
So we must have $\prim_i \equiv p_I \wedge q_J$ and $\sub_i \equiv s_I \wedge r_J$.
We have $\psi^+(A) = \psi_1^+(A) \wedge \psi_2^+(A) = \remX(s_I) \wedge \remX(r_J) = \remX(s_I \wedge r_J) \equiv \remX(\sub_i)$, using ($\star$).
\item
Let $\psi = \psi_1\vee \psi_2$, where both $\psi_1$, $\psi_2$ are in XNF.
Let $\decomp(\psi_1, \foa(\psi_1))=\{(p_1,s_1), \dots,(p_k,s_k)\}$, and
$\decomp(\psi_2, \foa(\psi_2))=\{(q_1,r_1), \dots,(q_m,r_m)\}$.
Since decompositions are covering and disjoint, there must be exactly one $I$ such that $A\models p_I$ and exactly one $J$ such that $A\models q_J$.
By the induction hypothesis, $\psi_1^+(A)=\remX(s_I)$ and $\psi_2^+(A)=\remX(r_J)$.
By \cite[Thms. 2 and 3]{Darwiche11}, the only compressed decomposition for $\psi$ is
\[\{(p_l\wedge q_j, s_l \vee r_j) \mid 1\leq l \leq k\text{, }1\leq j \leq m\text{, and }p_i\wedge q_j \not\equiv\bot\}\]
So we must have $\prim_i \equiv p_I \wedge q_J$ and $\sub_i \equiv s_I \vee r_J$.
We have $\psi^+(A) = \psi_1^+(A) \vee \psi_2^+(A) = \remX(s_I) \vee \remX(r_J) = \remX(s_I \vee r_J) = \remX(\sub_i)$.
\end{compactitem}
No other operators are relevant for a property in XNF.
\end{proof}

\lemmaandor*
\begin{proof}
\renewcommand{\prim}{\mathsf{p}}
\renewcommand{\sub}{\mathsf{s}}
\begin{compactenum}[(1)]
\item
We show the more general statement that if $\rho$ is a sequence of formulas 
such that $s \gotos{\rho} s_\ell$ for some AND-node $s$ labeled $\chi$ and some final node $s_\ell$,
then for every sequence of atom sets $\sigma$ that satisfies $\rho$, it holds that
$\chi^+(\sigma) = \top$; the claim then follows from the case of $s=s_0$ where $\chi=\psi$. 
By induction on the length of $\rho$.
If $\rho$ is empty, then $s=s_\ell$ which is labeled $\top$, also $\sigma$ must be empty, and $\top^+(\epsilon) = \top$.
So let $\rho = \langle w \wedge \varsigma\rangle\cdot \rho'$ and
$s \goto{w} u \goto{\varsigma}{s'} \gotos{\rho} s_\ell$ such that $s' = (\chi', \phi')$.
For any $\sigma$ that satisfies $\rho$, we can write $\sigma = \langle A\rangle\cdot\sigma'$ such that $\sigma'$ satisfies $\rho'$.
By the induction hypothesis, $\chi'^+(\sigma')=\top$.
We can assume that $\chi$ is in XNF.
Let $\decomp(\chi, C_\env)=\{(\prim_1,\sub_1), \dots,(\prim_k,\sub_k)\}$.
By construction there must be some $i$, $1\leq i \leq k$, such that $\prim_i=w$, and
for $\decomp(\xnf(\sub_i), C_\ag)=\{(\prim_{i,1},\sub_{i,1}),\dots,(\prim_{i,m},\sub_{i,m})\}$, there is some $j$, $1\leq j \leq m$, such that $\varsigma=\prim_{i,j}$.
Moreover, we must have $\chi' = \xnf(\remX(\sub_{i,j}))$.
In addition, from the properties of a decomposition it follows that the following is a valid decomposition wrt. $C = \foa(\chi)$:
\begin{align*}
\decomp(\chi, C)= \{&(\prim_1,\sub_1), \dots,(\prim_{i-1},\sub_{i-1}),\\
&(\prim_i \wedge \prim_{i,1},\sub_{i,1}),\dots,(\prim_i \wedge \prim_{i,m},\sub_{i,m}), \\
&(\prim_{i+1},\sub_{i+1}), \dots,(\prim_k,\sub_k)
\}
\end{align*}
We have $A\models \prim_i \wedge\prim_{i,j}$ because $\sigma$ satisfies $\rho$. By \lemref{xnf:decomposition},
$\chi^+(A) = \remX(\sub_{i,j})$.
By the definition of progression for sequences using $\chi'^+(\sigma')=\top$, it follows that
$\chi^+(\sigma) = \top$.
\item
We show the more general statement that if there is an AND-node $s$ labeled $\chi$ and $\chi^+(\sigma) = \top$ for some sequence of atom sets
$\sigma$ over $\foa(\chi)$
then there is a sequence of formulas $\rho$
such that $\sigma$ satisfies $\rho$ and
$s \gotos{\rho} s_\ell$ for $s_\ell$ final; the claim then follows for $\chi=\psi$.
The proof is by induction on the length of $\sigma$.
If $\sigma$ is empty, then $\chi=\top$ is already final.
So let $\sigma = \langle A\rangle\sigma'$, so that $\chi^+(\sigma) = (\chi^+(A))^+(\sigma')$.
We can assume that $\chi$ is in XNF.
Let $\decomp(\chi, C_\env)=\{(\prim_1,\sub_1), \dots,(\prim_k,\sub_k)\}$.
Since the decomposition is covering, there must be some $i$, $1\leq i \leq k$, such that $A\models \prim_i$, let $w:= \prim_i$.
Similarly, for $\decomp(\xnf(\sub_i), C_\ag)=\{(\prim_{i,1},\sub_{i,1}),\dots,(\prim_{i,m},\sub_{i,m})\}$, there is some $j$, $1\leq j \leq m$, such that $A\models\prim_{i,j}$, let $\varsigma := \prim_{i,j}$.
There must be a path $s_0 \goto{w}\cdot\goto{\varsigma} s'$, $s'$ has label $\chi':=\xnf(\remX(\sub_{i,j}))$, and $\ANDOR[\chi]$ contains the AND-OR-graph $\ANDOR[\chi']$ as a subgraph.
The combined decomposition wrt. $C = \foa(\chi)$ is given by
\begin{align*}
\decomp(\chi, C)= \{&(\prim_1,\sub_1), \dots,(\prim_{i-1},\sub_{i-1}),\\
&(\prim_i \wedge \prim_{i,1},\sub_{i,1}),\dots,(\prim_i \wedge \prim_{i,m},\sub_{i,m}), \\
&(\prim_{i+1},\sub_{i+1}), \dots,(\prim_k,\sub_k)
\}
\end{align*}
such that $A\models \prim_i \wedge\prim_{i,j}$. By \lemref{xnf:decomposition},
$\chi^+(C) = \remX(\sub_{i,j})=\chi'$.
By the induction hypothesis, $\ANDOR[\chi']$ has a path $s' \goto{\rho'} s_\ell$ for a final node $s_\ell$ such that $\sigma'$ satisfies $\rho'$.
Hence we can set $\rho = \tup{w \wedge \varsigma}\cdot\rho'$, which is satisfied by $\sigma$, and we have  $s_0 \gotos{\rho} s_\ell$.
\end{compactenum}
\end{proof}

\lemmag*

\begin{proof}
First, in Item (2) of Def.~\ref{def:strategy}, for all $k$ a unique $w_i$ exists such that there is an edge $t_{k-1} \goto{w_i} s_{i}$ for some $s_i$ that satisfies
$\beta_k \models w_i$, because by construction of $\ANDOR$, for every AND-node $s$,
the set of outgoing edges are covering and disjoint.
Second, we show satisfiability of the formula 
Eq.~\eqref{eq:strategy}
in Def.~\ref{def:strategy}, by induction on $k$.
Let $\pi = \alpha_1, \alpha_2, \alpha_3, \dots$ be the assignments chosen by the environment, 
% , and $t_0$, $t_k$, $u_k$, $w_k$, and $\varsigma_k$ be as in Eq.~\ref{eq:strategy:path}, for $1\leq k \leq \ell$.
% We show satisfiability of \eqref{eq:strategy}
$s_i$, $0 \leq i < k$, the AND-nodes defined by Def.~\ref{def:strategy}, 
and $t_i$, $0\leq i<k-1$ the unique OR-node between $s_{i}$ and $s_{i+1}$.

If $k=1$ then 
let $t_0$ be such that $s_0 \goto{w} t_0$ and $w$ is satisfied by $\beta_0$, and $t_0\goto{\varsigma_j} u_j$, $1\leq j \leq m$ be all edges from $t_0$.
By assumption, $\Win_M(s_0)$ is valid, and so is $\PreC_{\Win_{M-1}}(s_0)$ and its subformula
\[
\forall X. (w \to  \bigvee_{j=1}^m \exists Y. \varsigma_j \wedge \Win(u_{j}))
\].
Since $\beta_0 \models w$,
$\beta_0(\bigvee_{j=1}^m Y.  \varsigma_j \wedge \Win(u_{j}))$
is valid, and hence there must be some $j$, $1 \leq j \leq m$ such that the formula
$\beta_0(\varsigma_j \wedge \Win_{M-1}(u_{j}))$ with free variables $Y$ is satisfiable,
which equals \eqref{eq:strategy} as $\eta_1=\alpha_1$.

\smallskip
Let $k > 1$, and $\gamma_{k-1} = g(\pi|_{k-1})$.
By the induction hypothesis, $\beta_{k-1}(\Win_{M-k+1}(s_{k-1}))$ is satisfiable and according to Def.~\ref{def:strategy} satisfied by $\gamma_{k-1}$.
Thus $\gamma_{k-1}(\beta_{k-1}(\Win_{M-k+1}(s_{k-1})))$ is valid.
So there must be some $i < M - k +1$ such that $\gamma_{k-1}(\beta_{k-1}(\PreC_{\Win_{i}}(s_{k-1})))$ is valid.
Let $t_{k-1}$ be such that $s_{k-1} \goto{w} t_{k-1}$ and $w$ is satisfied by $\eta_k$,
and let $t_{k-1}\goto{\varsigma_j} u_j$, $1\leq j \leq m$ be all edges from $t_{k-1}$.
Then
\[
\pre \gamma_{k-1}(\pre \beta_{k-1}(
\forall X. (w \to  
\bigvee_{j=1}^m
\exists Y. \varsigma_j \wedge \Win_i(u_j))
))
\]
is valid where we chose the relevant conjunct from $\PreC_{\Win_{i}}$ corresponding to $w$.
Thus so is
\[
\pre \gamma_{k-1}(\pre \beta_{k-1}(
\beta_k(w \to  
\bigvee_{j=1}^m
\exists Y. \varsigma_j \wedge \Win_i(u_j))
))
\]
i.e. the formula obtained from the former when using the values of $\beta_k$ for $X$. Since $\eta_k = \pre \gamma_{k-1} \cup\pre \beta_{k-1}\cup\beta_k$ satisfies $w$ by definition, there must be some $j$, $1\leq j \leq m$, such that 
$\eta_k(\exists Y. \varsigma_j \wedge \Win_i(u_j))$ is valid, i.e.,
$\eta_k(\varsigma_j \wedge \Win_i(u_j))$ is satisfiable, which coincides with Eq.~\eqref{eq:strategy} because $\Win_i(u_j)$ does not contain $\pre X\cup \pre Y$.
% 
% Finally, this shows that for every $\pi$ there is a path in $\ANDOR$: 
% \[
% t_0 \goto{w_0} s_0 \goto{\varsigma_0} t_1 \goto{w_1} s_1 \goto{\varsigma_1}  
% t_2 \goto{w_2} s_2 \dots
% \]
% for $t_i$ as defined in Def.~\ref{def:strategy}.

Finally, note that there must be some $\ell\geq 0$ such that $s_\ell$ is labelled $\top$ as
otherwise the fixpoint $\Win(\ANDOR)$ could not exist.
\end{proof}

\thmnolookback*
\begin{proof}
An inductive argument shows that $\Win_i(s)$ can be expressed as a $\forall^*\exists^*$-formula without free variables, for all $i \geq 0$ and AND-nodes $s$. This clearly holds for $i=0$. In the inductive step, we assume that the claim holds for $\Win_i(s)$.
First, since all labels in $\ANDOR$ are lookback-free, $\Win_{i}(s)$ has no free variables.
When computing $\PreC_{\Win_i}$, let $v_s$ be a fresh boolean variable for every AND-node $s$. Let $\widehat \PreC_{\Win_i}$ be obtained from $\PreC_{\Win_i}$ by replacing every occurrence of $\Win_i(s)$ by $v_s$.
Then we have $\PreC_{\Win_i} \equiv \bigwedge_{s\in S_\land} v_s \leftrightarrow \Win_i(s) \wedge \widehat \PreC_{\Win_i}$ (using that $\Win_{i}(s)$ has no free variables). This is again a $\forall^*\exists^*$-formula.
Moreover, the fixpoint $\Win(\ANDOR)$ must exist because due to the absence of lookback, unrolling loops in $\ANDOR$ more than once does not result in any change.
Hence, satisfiability of $\Win(\ANDOR)$ amounts to satisfiability of a $\forall^*\exists^*$-formula
(or equivalently, validity of a $\exists^*\forall^*$-formula).
Then the claim follows from \thmref{correctness}.
\end{proof}

\theoremMC*
\begin{proof}
\newcommand{\mcK}{MC$_{\mathcal K}$}
Let $\mathcal K$ be the set of all constants in atoms of $\psi$,
and {\mcK} the set of all quantifier-free boolean formulas where all atoms are MCs over variables $V$ and constants in $\mathcal K$, so {\mcK} is finite up to equivalence.
We show by induction on $i$ that for all AND-nodes $s$ of $\ANDOR$, the formula $\Win_i(s)$ is in \mcK. This clearly holds for $i=0$.
Under the assumption that $\Win_i(s) \in \text{\mcK}$ for all $s$, we show that $\PreC_{\Win_i}(s) \in \text{\mcK}$, from which the claim follows for the case of $i+1$.
Indeed, by definition 
\begin{equation}
\label{eq:preCeq}
\PreC_{\Win_i}(s) = \bigwedge_i \forall X. w_i \to \bigvee_j \varsigma_j  \wedge \Win_i(s_{ij})
\end{equation}
where all of $w_i$, $\varsigma_j$, and $\Win_i(s_{ij})$ are in \mcK.
LRA has quantifier elimination, and a quantifier elimination procedure \'{a} la Fourier-Motzkin ~\cite[Sec. 5.4]{KS16}
can produce a quantifier-free formula equivalent to \eqref{eq:preCeq} that is again in \mcK.

Therefore all formulas $\Win_i(s)$ for and AND-node $s$ of $\ANDOR$ and $i\geq 0$ come from a set that is finite up to equivalence, so whenever computing a new formula $\Win_i(s)$, it can be checked whether it is equivalent to an already computed one.
Thus the fixpoint $\Win(\ANDOR)$ must exist, and the synthesis problem can be solved via \thmref{correctness}.
\end{proof}

\theoremIPC*
\begin{proof}
\newcommand{\ipcK}{IPC$_{\mathcal K}$}
Let $\mathcal K$ be the set of all constants in atoms of $\psi$,
and {\ipcK} the set of all quantifier-free boolean formulas where all atoms are MCs over variables $V$ and constants in $K$, so {\ipcK} is finite up to equivalence.
We show by induction on $i$ that for all AND-nodes $s$ of $\ANDOR$, the formula $\Win_i(s)$ is in \ipcK. This clearly holds for $i=0$.
Under the assumption that $\Win_i(s) \in \text{\ipcK}$ for all $s$, we show that $\PreC_{\Win_i}(s) \in \text{\ipcK}$, from which the claim follows for the case of $i+1$.
Indeed, in the definition of $\PreC$, as recalled in \eqref{eq:preCeq},
all of $w_i$, $\varsigma_j$, and $\Win_i(s_{ij})$ are in \ipcK.
By \cite[Thm. 2]{Demri06}, quantifier elimination of a formula in {\ipcK} yields again a formula in \ipcK.

We conclude as in \thmref{mc} that the synthesis problem is solvable.
\end{proof}

\thmBL
\begin{proof}
Let $\psi$ have $K$-bounded lookback, for some $K>0$, and $\Phi$ be the set of formulas with free variables $V$, 
quantifier depth at most $2K$, and using as atoms all atoms in 
$\psi$, but where variables in $\pre V \cup V$ may be replaced by $\mathcal V$.
Being a set of formulas with bounded quantifier depth over a finite set of atoms, $\Phi$ is finite up to equivalence.

We show by induction on $i$ that for all AND-nodes $s$ of $\ANDOR$, the formula $\Win_i(s)$ is equivalent to a formula in $\Phi$. This clearly holds for $i=0$.
Under the assumption that $\Win_i(s) \in \Phi$ for all $s$, we show that $\PreC_{\Win_i}(s)$ is equivalent to a formula in $\Phi$, from which the claim follows for the case of $i+1$.
Let $\phi :=\PreC_{\Win_i}(s)$ as defined in \eqref{eq:preCeq}, but where for distinction we rename quantified variables according to the number of $\forall$-quantifiers in whose scope they occur: let $v\in V$ be renamed to $v_i$ if it occurs below $i$ $\forall$-quantifiers.
Let $[\varphi]$ be obtained from $\varphi$ by eliminating all equality literals $x=y$ in $\varphi$ and uniformly substituting all variables in an equivalence class by some arbitrary representative.
By the induction hypothesis, the quantifier depth of $\Win(s_{ij})$ is bounded by $2K$ (counting both $\exists$ and $\forall$, which occur in an alternating fashion).
So the formula $[\varphi]$ encodes a tree of paths $\rho_1, \rho_2,\dots, \rho_l$ which represent variable dependencies that span at most $K+1$ instants, due to the two additional quantifiers added in $\PreC$. 
Since $\psi$ has $K$-bounded lookback, for each path $\rho_i$, there can be a dependency chain from the current variables $V$ to at most $V_K$. Thus, all atoms that mention $V_{K+1}$ must be irrelevant and can be removed, so that we obtain a formula equivalent to $[\phi]$ of quantifier depth $2K$.

Therefore all formulas $\Win_i(s)$ for and AND-node $s$ of $\ANDOR$ and $i\geq 0$ come from a set that is finite up to equivalence, so whenever computing a new formula $\Win_i(s)$, it can be checked whether it is equivalent to an already computed one.
Here equivalence of $\phi$ and $\phi'$ can be checked by verifying that $\phi \models_\TT \phi'$ and vice versa, i.e., that $\phi \wedge \neg \phi'$ is $\TT$-unsatisfiable.
This requires to check satisfiability of a FO formula with $2K$ quantifier alternations.
Thus the fixpoint $\Win(\ANDOR)$ must exist, and the synthesis problem can be solved via \thmref{correctness}.
\end{proof}

\end{document}